\newtheorem{theorem}{Theorem}
\newtheorem{lemma}{Lemma}
\newtheorem{proposition}{Proposition}
\newtheorem{corollary}{Corollary}
\DeclareMathOperator{\var}{var}
\DeclareMathOperator{\supp}{supp}
\begin{document}

\title{ \textbf{\Large Quantile Treatment Effects in Difference in
Differences Models under Dependence Restrictions and with only Two Time
Periods}\thanks{
We thank an Editor and two anonymous referees for their
constructive comments which have greatly improved the paper. We also thank
Yi-Ting Chen, Le-Yu Chen, Jiti Gao, Wolfgang Hardle, Yu-Chin Hsu, Hidehiko
Ichimura, Brett Inder, Kengo Kato, Jen-Che Liao, Daisuke Nagakura,
Anastasios Panagiotelis and participants of seminars at Academia Sinica,
Keio University, Tokyo University and Monash University. The last author
gratefully acknowledges the financial support from Singapore Ministry of
Education Academic Research Fund Tier 1 (FY2015-FRC3-003). } }
\author{ Brantly Callaway\thanks{
Department of Economics, Temple University (brantly.callaway@temple.edu) } 
\hspace{1.2cm} Tong Li\thanks{
Department of Economics, Vanderbilt University (tong.li@vanderbilt.edu) } 
\hspace{1.2cm} Tatsushi Oka\thanks{
Department of Economics, National University of Singapore (oka@nus.edu.sg) } 
\vspace{0.5cm} }
\date{ The first version: December 15, 2015 \\
This version: \today   
       \vspace{0.0cm} }
\maketitle

\begin{abstract}
This paper shows that the Conditional Quantile Treatment Effect on the
Treated can be identified using a combination of (i) a conditional
Distributional Difference in Differences assumption and (ii) an assumption
on the conditional dependence between the change in untreated potential
outcomes and the initial level of untreated potential outcomes for the
treated group. The second assumption recovers the unknown dependence from
the observed dependence for the untreated group. We also consider estimation
and inference in the case where all of the covariates are discrete. We
propose a uniform inference procedure based on the exchangeable
bootstrap and show its validity. We conclude the paper by estimating the effect of
state-level changes in the minimum wage on the distribution of earnings for
subgroups defined by race, gender, and education.
\end{abstract}

\vspace{0.5cm}

\vspace{-0.5cm}

\setcounter{page}{0} \thispagestyle{empty}

\vspace{1cm}

\noindent \textit{Keywords:} Quantile Treatment Effects, Copula, Panel Data

\ 

\noindent \textit{JEL Classification:} C14, C21, C23, C50

\newpage

\section{Introduction}

Researchers and policy makers are interested in evaluating the effect of
participating in a program or experiencing a treatment but this is not a
trivial task due to self-selection. \cite{Gronau1974JPE} and \cite%
{Heckman1974Emtca} study the issue of self-selection in the context of labor
markets and \cite{Amemiya1985book} provides a comprehensive framework for
this issue in his influential book \textit{Advanced Econometrics}, which
also includes his seminal paper \cite{Amemiya1973Emtca} on the Tobit model.
In the literature, the Average Treatment Effect (ATE) or Average Treatment
Effect on the Treated (ATT) has received great attention. But there are
cases where a researcher may be interested in studying the distributional
effect of treatment. In fact, estimating distributional treatment effect
parameters is becoming more common in applied work. Recently, distributional
treatment effects have been estimated in the context of welfare reform %
\citep{bitler-gelbach-hoynes-2006, bitler-gelbach-hoynes-2008}, conditional
cash transfer programs in developing countries \citep{djebbari-smith-2008},
head start \citep{bitler-hoynes-domina-2014}, and the effect of Job Corps %
\citep{eren-ozbeklik-2014}. One thing that each of the above empirical
papers have in common is that each uses experimental data.  In practice, however, experimental data is not always available and research for estimating
distributional treatment effects in observational studies is of importance.


The current paper considers identification and estimation of a particular
distributional treatment effect parameter called the Conditional Quantile
Treatment Effect on the Treated (CQTT) under a Difference in Differences
(DID) setup when only two periods of panel or repeated cross sections data
are available. For applied researchers, it is not uncommon to have exactly
two periods of data. For example, 25\% of the papers employing DID
assumptions considered by \cite{bertrand-duflo-mullainathan-2004} used
exactly two periods of data. To give some specific examples, the Current
Population Survery (CPS) Merged Outgoing Rotation Groups contains a 2-period
panel \citep[see][for instance]{madrian-lefgren-2000, riddell-song-2011} and
the Displaced Workers Survey contains data on current wages and wages before
displacement \citep[see][for example]{farber-1997}.

For identifying the CQTT, we consider a distributional extension of the
``parallel trends'' assumption, which is commonly used under the mean DID
framework to ensure that the \textit{path} of potential outcomes without
treatment are the same between the treated and untreated groups on average.
We extend this mean DID assumption to allow for the distribution of the path
to be the same for the treated and untreated groups. This Distributional DID
assumption is not strong enough to point-identify the counterfactual
distribution of outcomes for the treated group as well as the CQTT. This is
because identifying the counterfactual distribution for the treated group
hinges on (i) knowing the distribution of the change in untreated potential
outcomes and (ii) knowing the dependence, or copula, between the change and
initial level of untreated potential outcomes. The Distributional DID
assumption handles the first identification challenge but not the second.
The key innovation of the current paper is to introduce the Copula
Invariance assumption, which replaces the unknown copula with the
observed copula between the change and initial level of outcomes for the
untreated group and handles the second identification challenge.
Importantly, the Copula Invariance assumption allows for the marginal distribution of
untreated outcomes in the period before treatment to differ for the treated
and control groups. Moreover, this assumption only requires that the
researcher has access to two periods of data.

Given the point-identification result for the counterfactual distribution,
we propose a two-step estimation procedure based on conditional empirical
distributions. In the first step, we estimate conditional empirical
distributions of observed outcomes for the treated and untreated groups
separately. In the second step, the first-step estimates are used in the
estimation of the distribution of conditional counterfactual outcomes. The
CQTT is estimated by inverting the estimated counterfactual distribution of
potential outcome for the treated group. The proposed estimator is shown to
converge in distribution to a Gaussian process at the parametric rate
through empirical process techniques, while the limiting process is not
nuisance parameter free because of estimation error at the first step and
our identification strategy. To obtain an accurate approximation for the
limiting process in a finite sample, we consider the exchangeable bootstrap
proposed by \cite{praestgaard-wellner-1993} and show its first order
validity.

This paper contributes to a growing literature on estimating quantile
treatment effects (QTE) with observational data. The conditional QTE can be
analyzed under the quantile regression framework proposed by \cite%
{koenker-bassett-1978}. See \cite{koenker-2005} for a comprehensive review. \cite%
{abadie-angrist-imbens-2002} and \cite{chernozhukov-hansen-2005, CH2006JoE,
ChernozhukovHansen2013ARE} consider the conditional QTE allowing for
endogeneous regressors in quantile regression when instrumental variables
are available. For unconditional QTE, \cite{firpo-2007} proposes a
propensity score weighting estimator under a selection on observables
assumption and \cite{abadie-2003} and \cite{frolich-melly-2013} consider the
case where a researcher has access to an instrument that satisfies the
exclusion restriction only after conditioning on some covariates. Also, a
counterfactual distribution of an outcome under a counterfactual covariates
distribution has been studied by \cite{firpo-fortin-lemieux-2009}, \cite%
{Rothe2012Emtca} and \cite{chernozhukov-val-melly-2013} among others.

\cite{athey-imbens-2006} suggest the Changes in
Changes (CIC) model as an alternative to the DID model and \cite%
{melly-santangelo-2015} extend it to the case where the identifying
assumptions hold conditional on covariates. \cite{ChernozhukovEtAl2013Emtca}
consider identification of the conditional ATE and QTE for nonseparable
panel data models under a time-homogeneity condition. \cite%
{Dhaultfoeuille2015} present identification of nonseparable models using
repeated cross-sections. Other recent contributions to panel quantile
regression include 
\cite{AD2008JBES}, 
\cite{canay-2011}, 
\cite{KatoGalvaoMontesrojas2012JoE}, 
\cite{rosen-2012}, 
\cite{GLL2013JASA}, 
\cite{GalvaoKato2014JBES}, 
\cite{CFHHN2015JoE}, 
\cite{graham2015quantile}, 
\cite{LiOka2015JoE},
\cite{arellano2016nonlinear},
\cite{CLP2016Emtca} and
\cite{ShakeebEtAl2016JoE} 
among others.
 
Our work is closely related to \cite{fan-yu-2012} and \cite{callaway-li-2015}
in that both works consider the identification of the QTT under a
Distributional DID assumption. \cite{fan-yu-2012} consider the partial
identification of the QTT by using a Frechet-Hoeffding bound for a copula,
while \cite{callaway-li-2015} establish the point-identification of the QTT
using at least three periods of panel data by replacing an unknown,
unconditional copula in the last two periods with an observed copula in the
first two periods. In this paper, we focus on estimating conditional QTEs,
while it would be relatively straightforward to extend our analysis to
accommodate the unconditional QTE. Conditioning on covariates makes the
Distributional DID assumption more likely to hold in empirical applications.
For example, it seems likely that the path of earnings in the absence of
treatment depends on individual characteristics such as age and experience.
If these covariates are distributed differently across the treated and
untreated groups, then an unconditional DID approach will not be valid
though a conditional DID approach will %
\citep{heckman-ichimura-smith-todd-1998, abadie-2005}. Similarly, our Copula
Invariance assumption is more plausible when it holds conditional on
covariates. We focus on identifying and estimating the CQTT in the case where all covariates are discrete which allows
for nonparametric estimation that does not suffer from the curse of
dimensionality and is a relevant case for much applied research. The
identification and estimation with discrete regressors have been considered
in \cite{ChernozhukovEtAl2013Emtca} for the conditional ATE and QTE and \cite%
{graham2015quantile} for quantile panel data model with random coefficients.

We use our method to investigate the effect of increases in the minimum wage
on the distribution of earnings. Using individual level panel data from the
Current Population Survey, we consider a group of five treated states that
increased their minimum wage in 2007 relative to a control group of states
that did not increase the minimum wage. Forming groups based on race,
gender, and education, we are only able to reject the null of no effect at
any quantile for three out of eight groups. For these groups, the effect of
the minimum wage is concentrated in the lower part of the distribution and
appears to result in lower earnings.

The paper is organized as follows. In Section 2, we provide identification
results for the CQTT under the Distributional DID assumption and the Copula
Invariance assumption when both hold after conditioning on some covariates.
In Section 3, we discuss an estimation procedure and provide asymptotic
results. We also introduce a resampling procedure and give a theoretical
result on its consistency. Section 4 investigates the finite sample
performance of our estimator using Monte Carlo simulations. In Section 5, we
present an empirical application on the effect of increasing the minimum
wage on the distribution of earnings. Section 6 concludes.
All proofs are given in the Appendix.

\section{Identification}

This section considers the main identification results in the paper. We
consider identification of the QTT when the identifying assumptions hold
after conditioning on some observed covariates $X$. We also consider
additional requirements for identification when only repeated cross sections
are available. Through the paper, we use $\supp(V)$ and $\supp(V|W)$ to
denote the support of $V$ and the support of $V$ conditional on $W$,
respectively, for some random variables $V$ and $W$.

\subsection{Identification}

We consider a DID framework, in which all individuals in the sample receive
no treatment in period $t-1$ while a fraction of individuals receive the
treatment in period $t$. Let $D_{it}$ be a treatment indicator that takes
the value one if individual $i$ is treated in period $t$ and zero
otherwise. For each individual $i$, there is a pair of potential outcomes $%
(Y_{is}(0), Y_{is}(1) )$ in period $s \in\{t-1, t\}$, where $Y_{is}(0)$ and $%
Y_{is}(1)$ denote potential outcomes in the untreated and treated state in
period $s$, respectively. Every individual experiences either treated or
untreated status but not both, and thus the pair of potential outcomes is
not observable.

We suppose that researchers can access panel data, which consist of
outcomes, covariates and treatment statuses for each individual over some
periods including both before and after the implementation of the program of
interest. We consider the case with two periods of panel data and so that the
data consists of observations $\{(Y_{i,t-1}, Y_{it}, X_{i},
D_{it})\}_{i=1}^{n}$ with $n$ denoting the sample size, where the observed
outcomes are given by 
\begin{eqnarray}  \label{eq:outcome-def}
Y_{i, t-1}:= Y_{i, t-1}(0) \ \ \ \mathrm{and} \ \ \ Y_{it}:= (1 - D_{it})
Y_{it}(0) + D_{it} Y_{it}(1).
\end{eqnarray}
Also, $X_{i}$ denotes a vector of covariates, which may include all time-varying
variables as well as time-invariant variables. Throughout the paper, we
assume independent and identically distributed (i.i.d.) observations within
treatment and control group as stated below.

\vspace{0.5cm} \noindent \textbf{Assumption A1} (Random sampling). The
two-periods panel data consists of observations $\{(Y_{i,t-1} , Y_{it},
X_{i}, D_{it})\}_{i=1}^{n}$ from the structure in (\ref{eq:outcome-def}).
The potential outcomes $\big (Y_{i,t-1}(0), Y_{i,t-1}(1) \big)$ and $\big (%
Y_{it}(0), Y_{it}(1) \big)$ are cross-sectionally i.i.d.~conditional on
treatment status $D_{it}$. \vspace{0.5cm} 

This assumption allows for the possibility that the marginal or joint
distributions of potential outcomes can be different between treatment and
control groups. Under this conditional random sampling assumption, we use $%
F_{Y_{s}(0)|X, D_{t}}$ and $F_{Y_{s}(1)|X,D_{t}}$ to denote conditional
distributions of the potential outcomes in period $s$ given covariates and
treatment status and let $F_{Y_{s}|X, D_{t}}$ be the conditional distribution of the 
observed outcome $Y_{is}$ in period $s$ given covariates and
treatment status.

Our primary goal is to identify distributional features of treatment effects
through conditional distributions of observed outcomes given covariates and $%
D_{it}$. The issue of identification of treatment effects arises because the
pair of potential outcomes is unobservable for each individual and thus
marginal distributions of potential outcomes are not necessarily identified
from data. For instance, the conditional distribution $F_{Y_{t}(1)|X,
D_{t}=1}$ of a potential outcome $Y_{it}(1)$ given $X_{i}$ and $D_{it}=1$
can be identified by the observed distribution $F_{Y_{t}|X, D_{t}=1}$ of
outcomes for the treated group. But the counterfactual distribution
$F_{Y_{t}(0)|X, D_{t}=1}$  of untreated
potential outcomes for the treated group cannot be identified generally
from the sample. Thus, for identifying distributional features of treatment
effects, we need to make additional restrictions.

As a measure of treatment effects, this paper considers the identification
and estimation of the CQTT given $X_{i} = x$, which measures the quantile treatment effect
within a subpopulation of individuals with treatment status $D_{it}=1$ and
common history $X_{i} =x$. Let $\mathcal{X}$ be the common support of $X_{i}$
for untreated and treated groups. The CQTT given $x \in \mathcal{X%
}$ at $\tau \in \mathcal{T} \subset (0,1)$ is defined as 
\begin{eqnarray*}
\Delta_{x}^{QTT}(\tau) := F_{Y_{t}(1)|X=x, D_{t}=1}^{-1}(\tau) -
F_{Y_{t}(0)|X=x, D_{t}=1}^{-1}(\tau),
\end{eqnarray*}
where $F_{Y_{t}(j)|X, D_{t}}^{-1}$ is the quantile
function of $Y_{it}(j)$
conditional on $X_{i}$ and $D_{it}$, given by 
\begin{eqnarray*}
F_{Y_{t}(j)|X, D_{t}}^{-1}(\tau) := \inf \big \{ y \in \mathbb{R}:
F_{Y_{t}(j)|X , D_{t}}(y) \ge \tau \big \}, \ \ \ j = 0, 1.
\end{eqnarray*}
As discussed in the preceding paragraph, we can identify the distribution $%
F_{Y_{t}(1)|X=x, D_{t}=1}$ from observables. For identifying the CQTT, it
remains to establish the identification of the counterfactual distribution $%
F_{Y_{t}(0)|X=x, D_{t}=1}$. To this end, we need to make three additional
restrictions.

The first condition restricts a time-difference of potential outcomes
without treatment, denoted by $\Delta Y_{it}(0):=Y_{it}(0)-Y_{i,t-1}(0)$, is
independent of the treatment status $D_{it}$ conditional on $X_{i}$.

\vspace{0.5cm} \noindent \textbf{Assumption A2.} (Distributional Difference
in Differences) 
\begin{eqnarray*}
\Pr \big \{ \Delta Y_{it}(0) \le \Delta y |X_{i}, D_{it} = 1 \big \} = \Pr %
\big \{ \Delta Y_{it}(0) \le \Delta y |X_{i}, D_{it} = 0 \big \},
\end{eqnarray*}
for all $\Delta y \in \supp \big( \Delta Y_{it}(0)|X_{i}\big)$. \vspace{0.5cm%
} 

This assumption ensures that potential outcomes without treatment are
comparable between treatment and control groups after taking a
time-difference, conditional on covariates. An analogous condition employed under the mean DID
framework is the ``parallel trends'' assumption: 
\begin{eqnarray*}
E[\Delta Y_{it}(0)| X_{i}, D_{it} = 1] = E [\Delta Y_{it}(0) | X_{i}, D_{it}
= 0 ],
\end{eqnarray*}
which is necessary for identifying the ATT \citep[see][]{heckman-ichimura-smith-todd-1998, abadie-2005}. The
distributional restriction in Assumption A2 replaces this standard mean
restriction. If multiple pre-treatment periods in sample are available, then
this assumption can be tested under a strict stationary assumption.

For the treated group, we can identify (i) the distribution $%
F_{Y_{t-1}(0)|X, D_{t}=1}$ of untreated potential outcomes in period $t-1$
from observed outcomes and (ii) the distribution $F_{\Delta Y_{t}(0)|X,
D_{t}=1}$ of the change in untreated potential outcomes through Assumption
A2 (the distributional DID). When these two distributions are identified,
the average untreated potential outcome (and hence, the ATT) is identified.
Without imposing an additional assumption, however, the CQTT is not
point-identified, while the CQTT can be partially identified along the line of 
\cite{fan-yu-2012}. This is because many possible distributions of untreated
potential outcomes in period $t$ are observationally equivalent. For
example, the distribution $F_{Y_{t}(0)|X, D_{t} =1}$of untreated potential
outcomes in period $t$ will be highly unequal if the change in untreated
potential outcomes and the initial untreated outcome are strongly positively
dependent. On the other hand, the distribution of untreated potential
outcomes in period $t$ will be less unequal if the change and initial level
of untreated potential outcomes are independent or negatively dependent.

The next condition imposes a restriction on the joint distribution $%
F_{\Delta Y_{t}(0), Y_{t-1}(0) |X, D_{t}}$ of $\Delta Y_{it}(0)$ and $%
Y_{i,t-1}(0)$ conditional on $X_{i}$ and $D_{it}$ through the copula $%
C_{\Delta Y_{t}(0), Y_{t-1}(0) |X, D_{t}}$ of $\Delta Y_{it}(0)$ and $%
Y_{i,t-1}(0)$ conditional on $X_{i}$ and $D_{it}$. By Sklar's theorem, we
have 
\begin{eqnarray*}
F_{\Delta Y_{t}(0), Y_{t-1}(0) |X, D_{t}} (\Delta y, y) = C_{\Delta
Y_{t}(0), Y_{t-1}(0) |X, D_{t}} \big ( F_{ \Delta Y_{t}(0) |X, D_{t}}
(\Delta y), F_{Y_{t-1}(0) |X, D_{t}} (y) \big ),
\end{eqnarray*}
for $(\Delta y, y) \in \supp(\Delta Y_{it}(0), Y_{i,t-1}(0) |X_{i}, D_{it})$%
. The following condition requires an invariance of the conditional copula
with respect to the treatment status $D_{it}$ conditional on $X_{i}$.

\vspace{0.5cm} \noindent \textbf{Assumption A3} (Copula Invariance). 
For each $x \in \mathcal{X}$ and for all $(u, v) \in [0,1]^{2}$,
\begin{eqnarray*}
C_{\Delta Y_{t}(0), Y_{t-1}(0) |X=x, D_{t}=1}(u, v) = C_{\Delta Y_{t}(0),
Y_{t-1}(0) |X=x, D_{t}=0}(u, v).
\end{eqnarray*}
\vspace{0.5cm} 

Given a realized value of some random variable, the marginal distribution
evaluated at this value can be interpreted as a ranking normalized to the
unit interval. The conditional copula function captures some rank dependency
between two variables $\Delta Y_{it}(0)$ and $Y_{i,t-1}(0)$ conditional on $%
X_{i}$ and $D_{it}$ and Assumption A3 requires that the dependency of
ranking of these random variables $\Delta Y_{it}(0)$ and $Y_{i,t-1}(0)$ are
the same for the treated and control groups. As in Assumption A1, however,
this assumption does not rule out the possibility that the joint
distribution of $\Delta Y_{it}(0)$ and $Y_{i,t-1}(0)$ conditional on $X_{i}$
and $D_{it}$ varies between the treatment and control group. Likewise,
Assumption A3 does not imply Assumption A2 because A3 restricts only the
copula, not the marginal distribution of the change in untreated potential
outcomes over time.

The Copula Invariance assumption recovers the missing dependence required to
uniquely identify the counterfactual distribution of untreated potential
outcomes for the treated group. It does so by replacing the unknown copula
for the treated group with the known copula from the untreated group.
Intuitively, if, for example, we observe that observations in the control
group at the top of the distribution of initial outcomes tend to experience
the largest increases in outcomes over time, the Copula Invariance
assumption implies that this would also occur for the treated group. The
Distributional DID assumption implies that the
distribution of the change in outcomes is the same for the two groups. But
the initial distribution of outcomes can be different for the two groups.

The Distributional DID assumption and the Copula
Invariance assumption are not directly testable. However, in the spirit of
placebo testing in DID models, they both can be tested
using periods before the treated group becomes treated. One idea would be to
directly test each assumption in these earlier periods. Another, simpler
idea would be to implement our procedure in the earlier periods and test
that $\Delta^{QTT}_x(\tau) = 0$ for all $\tau \in \mathcal{T}$.

As an additional identifying restriction, we assume continuity conditions on
distributions of potential outcomes and its time-difference as below.

\vspace{0.5cm} \noindent \textbf{Assumption A4} (Continuous distributions).
Each random variable of $\Delta Y_{it}(0)$ and $Y_{i,t-1}(0)$ has a
continuous distribution conditional on $X_{i}$ and $D_{it}$
and a random variable $Y_{it}(1)$ also has a continuous distribution
conditional on $X_{i}$ and $D_{it} = 1$. Each distribution has a compact
support with densities uniformly bounded away from 0 and $\infty$ over the
support. \vspace{0.5cm} 

The continuity of marginal distributions conditional on treatment status
guarantees that conditional copulas in Assumption A3 are unique and
facilitates the identification analysis. More precisely, we obtain
identification by employing the Rosenblatt transform, which is the
distribution transform studied by \cite{Rosenblatt1952AMS}. Also, Assumption
A4 imposes a compact support assumption as in \cite{athey-imbens-2006} in
order to avoid technical difficulties in the rest of analysis, while this
condition is not used for our identification analysis and can be replaced by
other conditions for the rest of the results.

Given the random sample in Assumption A1, the additional conditions in
Assumption A2-A4 deliver the point-identification of the counterfactual
distribution as in the following theorem.

\vspace{0.5cm}

\begin{theorem}
\label{theorem:identification-1}  Suppose that Assumption A1-A4 hold. Then,  
\begin{eqnarray*}
  F_{Y_{t}(0)|X=x, D_{t}=1} (y) = \Pr \big \{ \Delta Y_{it} + F_{ Y_{t-1}|X=x,
D_{t}=1}^{-1} \circ F_{Y_{t-1}|X=x, D_{t}=0} ( Y_{i,t-1} ) \le y | X_{i}{=}%
x, D_{it}{=}0 \big \},
\end{eqnarray*}
for all $x \in \mathcal{X}$ and $y \in \supp(Y_{it}(0)|X_{i}=x, D_{it}{=}1)$.
\end{theorem}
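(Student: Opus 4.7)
The plan is to build an observable random variable whose conditional distribution given $X=x, D_{it}=0$ coincides with the target counterfactual $F_{Y_t(0)|X=x,D_t=1}$, using Sklar's theorem together with the Rosenblatt / probability integral transform. Throughout I use the identities $F_{Y_{t-1}|X=x, D_t=d} = F_{Y_{t-1}(0)|X=x, D_t=d}$ for $d \in \{0,1\}$ (since nobody is treated at $t-1$) and $F_{\Delta Y_t|X=x, D_t=0} = F_{\Delta Y_t(0)|X=x, D_t=0}$ (untreated observed = untreated potential).

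First I would express the target as the pushforward of the joint law $F_{\Delta Y_t(0), Y_{t-1}(0)|X=x, D_t=1}$ under the map $(\Delta y, y')\mapsto \Delta y + y'$, since $Y_{it}(0)=\Delta Y_{it}(0)+Y_{i,t-1}(0)$. Next, by Assumption A4, all relevant marginals are continuous, so Sklar's theorem gives a unique conditional copula $C_{\Delta Y_t(0), Y_{t-1}(0)|X=x, D_t=1}$. Applying Assumption A3 (Copula Invariance) replaces this unknown copula with $C_{\Delta Y_t(0), Y_{t-1}(0)|X=x, D_t=0}$, and applying Assumption A2 (Distributional DID) replaces the marginal $F_{\Delta Y_t(0)|X=x, D_t=1}$ with the observable marginal $F_{\Delta Y_t|X=x, D_t=0}$. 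The second marginal $F_{Y_{t-1}(0)|X=x, D_t=1}$ equals the observable $F_{Y_{t-1}|X=x, D_t=1}$.

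Now I would realize this joint distribution using a Rosenblatt transform on the untreated subpopulation. Conditional on $X=x, D_{it}=0$, set
\begin{equation*}
U := F_{\Delta Y_t|X=x, D_t=0}(\Delta Y_{it}), \qquad V := F_{Y_{t-1}|X=x, D_t=0}(Y_{i,t-1}).
\end{equation*}
By A4, $U$ and $V$ are each uniform on $[0,1]$ and their joint law is exactly $C_{\Delta Y_t(0), Y_{t-1}(0)|X=x, D_t=0}$. Then the pair $\bigl(\Delta Y_{it},\ F^{-1}_{Y_{t-1}|X=x, D_t=1}(V)\bigr)$ is obtained from $(U,V)$ by applying strictly increasing marginal transforms, so its joint distribution has copula $C_{\Delta Y_t(0), Y_{t-1}(0)|X=x, D_t=0}$ with marginals $F_{\Delta Y_t|X=x, D_t=0}$ and $F_{Y_{t-1}|X=x, D_t=1}$. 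By the substitutions from A2 and A3 above, this joint law coincides with $F_{\Delta Y_t(0), Y_{t-1}(0)|X=x, D_t=1}$.

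Summing the two coordinates and equating distributions yields the claim: the probability that $\Delta Y_{it} + F^{-1}_{Y_{t-1}|X=x, D_t=1}\circ F_{Y_{t-1}|X=x, D_t=0}(Y_{i,t-1}) \le y$ conditional on $X_i=x, D_{it}=0$ equals $F_{Y_t(0)|X=x, D_t=1}(y)$. The main delicate point, rather than any heavy calculation, is the careful bookkeeping for the Rosenblatt step: one must verify that applying $F^{-1}_{Y_{t-1}|X=x,D_t=1}$ to $V$ preserves the copula while changing only the second marginal, which is why the continuity and strictly positive density conditions in A4 are needed (to ensure uniqueness of the copula and strict monotonicity of the marginal CDFs, so that the inverse transform is well defined almost surely).
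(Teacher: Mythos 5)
Your proposal is correct and follows essentially the same route as the paper's proof: both rest on the Rosenblatt/probability-integral transform of $(\Delta Y_{it}(0), Y_{i,t-1}(0))$ into uniforms whose joint law is the conditional copula, then use Assumption A3 to transfer that copula from the treated to the untreated group and Assumption A2 to transfer the marginal of the change, with A4 guaranteeing uniqueness of the copula and the a.s.\ validity of the quantile-transform identities. The only difference is presentational — the paper starts from the treated-group representation and swaps the conditioning group, whereas you construct the observable random vector on the untreated group directly — but the substance is identical.
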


\vspace{0.5cm} 

The above theorem shows that the counterfactual distribution of interest can
be identified from observed outcomes of untreated individuals. This implies
that that treated and untreated groups must be similar in the distributional
sense of not only marginal distribution but also some dependency over
periods, and thus Assumption A2 and A3 play a crucial role for the
identification as shown in its proof. An immediate consequence is the
identification of the conditional QTT since the other distribution $%
F_{Y_{t}(1)|X=x, D_{t}=1}$ is identified by the distribution $F_{Y_{t}|X=x,
D_{t}=1}$ of observed outcomes.

Another implication of Theorem 1 is that unconditional QTTs are identified
using our approach. One can simply average over the covariates in the
conditional counterfactual distribution in Theorem 1 to obtain an
unconditional counterfactual distribution and then invert it to obtain the
unconditional quantiles. Thus, our method can be comparable to the results
that obtain unconditional QTTs; however, for the rest of the paper we focus
only on the CQTT.

As an extension of the result above, we establish the identification of the
counterfactual distribution when the available data consists of two-periods of repeated
cross sections, rather than panel data. 
In the following corollary, we only consider the case with time invariant covariates, 
following the DID literature with repeated cross sections \citep{abadie-2005, melly-santangelo-2015}.
Even when the data generating
process satisfies Assumption A1-A4, the change in outcome over time, $Y_{it}
- Y_{i,t-1}$, is unobserved because each individual in sample is observed
only in one period. 
To deal with the identification issue due to the data,
we consider a restriction of the conditional rank invariance, which enables us to
recover individual outcome in period $t$ by using the rank of outcome in
period $t-1$ as formalized by the following corollary.

\vspace{0.5cm}

\begin{corollary}
\label{corollary:identification-EX}  Consider the repeated cross sections  $%
\{(Y_{is}, X_{i}, D_{is})\}_{i=1}^{m^{(s)}}$  in period $s \in \{ t-1, t\}$ with $m^{(s)}$
being the sample size.  Suppose that the data generating
process for  the repeated cross sections satisfy Assumption A1-A4 hold.  
If the conditional copula of  $(Y_{i,t-1}(0),Y_{i, t}(0) )$  given 
$X_{i}=x$ and $D_{it}=1$
satisfies the rank invariance: for every $(u,v) \in [0,1]^2$, 
\begin{eqnarray*}
  C_{Y_{t-1}(0), Y_{t}(0)|X=x,D_{t}=1} (u, v) = \min\{u, v\},
\end{eqnarray*}
then,  for $y \in \supp(Y_{it}(0)|X_{i}=x, D_{it}=1)$,
\begin{eqnarray*}
  F_{Y_{t}(0)| X=x, D_{t}=1} (y) 
  = 
  \Pr \big \{ 
  \widetilde{\Delta Y}_{it}
  +
  F_{ Y_{t-1}|X=x,D_{t}=1}^{-1}  
  \circ F_{Y_{t-1}|X=x, D_{t}=0} ( Y_{i,t-1} ) 
  \le y | X_{i}=x, D_{it} = 0 \big \},
\end{eqnarray*}
where
$\widetilde{\Delta Y}_{it}:=
F_{ Y_{t}|X, D_{t}=0}^{-1} 
\circ 
F_{Y_{t-1}|X, D_{t}=0} ( Y_{i,t-1} ) - Y_{i, t-1}
$.
\end{corollary}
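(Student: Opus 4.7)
The plan is to derive the corollary as a direct specialization of Theorem~\ref{theorem:identification-1} in which the unobservable panel difference $\Delta Y_{it}$ on the control subpopulation is replaced by the rank-imputed quantity $\widetilde{\Delta Y}_{it}$. The whole argument reduces to showing that, on the event $\{D_{it}=0\}$, $\Delta Y_{it}=\widetilde{\Delta Y}_{it}$ almost surely under the additional rank invariance assumption; the claim then follows by substitution into the identity from Theorem~\ref{theorem:identification-1}.

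First, I would apply Theorem~\ref{theorem:identification-1} to the population underlying the repeated cross sections. Since Assumptions A1--A4 are still posited, Theorem~\ref{theorem:identification-1} gives the identity
\[
F_{Y_{t}(0)|X=x,D_{t}=1}(y)
=\Pr\!\big\{\Delta Y_{it}+F_{Y_{t-1}|X=x,D_{t}=1}^{-1}\circ F_{Y_{t-1}|X=x,D_{t}=0}(Y_{i,t-1})\le y\,\big|\,X_{i}{=}x,\,D_{it}{=}0\big\}.
\]
The conditional distributions $F_{Y_{t-1}|X,D_{t}=1}$ and $F_{Y_{t-1}|X,D_{t}=0}$ are identified from their respective cross sections, so the only obstruction to evaluating the right-hand side from the repeated cross sections is the appearance of the individual change $\Delta Y_{it}$ on the event $\{D_{it}=0\}$.

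Next, I would propagate rank invariance from the treated group to the control group. Rank invariance for $(Y_{t-1}(0),Y_{t}(0))$ given $(X=x,D_{t}=1)$ forces $Y_{t}(0)$ to be an almost surely strictly increasing deterministic function of $Y_{t-1}(0)$, so $\Delta Y_{t}(0)$ is itself a deterministic function of $Y_{t-1}(0)$ under $(X=x,D_{t}=1)$; the continuity and bounded-density conditions of A4 pin down the induced copula $C_{\Delta Y_{t}(0),Y_{t-1}(0)|X=x,D_{t}=1}$ as a Fr\'echet--Hoeffding bound. Assumption A3 transports that copula to $(X=x,D_{t}=0)$, and combining it with the additive identity $Y_{t}(0)=Y_{t-1}(0)+\Delta Y_{t}(0)$ recovers comonotonicity of $(Y_{t-1}(0),Y_{t}(0))$ conditional on $(X=x,D_{t}=0)$. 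Consequently, on $\{D_{it}=0\}$,
\[
Y_{it}(0)=F_{Y_{t}(0)|X,D_{t}=0}^{-1}\circ F_{Y_{t-1}(0)|X,D_{t}=0}\big(Y_{i,t-1}(0)\big)\quad\text{a.s.},
\]
and since untreated potential outcomes coincide with observed outcomes on this event, $\Delta Y_{it}=\widetilde{\Delta Y}_{it}$ a.s. Substituting into the identity from Theorem~\ref{theorem:identification-1} yields the stated formula.

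The main obstacle is the copula-transfer step: one must verify that the degenerate copula induced by rank invariance for the treated group is a comonotonic (rather than countermonotonic) Fr\'echet bound, so that the push-back through Assumption A3 together with $Y_{t}(0)=Y_{t-1}(0)+\Delta Y_{t}(0)$ delivers comonotonic dependence in the control group. Once that ordering is pinned down, the rest collapses to a routine substitution into Theorem~\ref{theorem:identification-1}.
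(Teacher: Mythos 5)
Your overall strategy is the same as the paper's: invoke Theorem~\ref{theorem:identification-1} and then argue that, on the control subpopulation, $\Delta Y_{it}=\widetilde{\Delta Y}_{it}$ almost surely, so that the unobservable panel difference can be replaced by the rank-imputed one. The paper, however, obtains the identity $F_{Y_{t}(0)|X=x,D_{t}=0}(Y_{it}(0))=F_{Y_{t-1}(0)|X=x,D_{t}=0}(Y_{i,t-1}(0))$ by reading the rank invariance condition as a statement about the group on which $\widetilde{\Delta Y}_{it}$ is computed (the $D_{it}=0$ group), and the substitution is then immediate; no transfer through Assumption~A3 is attempted.

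Your attempt to instead \emph{derive} control-group comonotonicity from treated-group rank invariance via A3 has a genuine gap at its central step. You claim that comonotonicity of $(Y_{t-1}(0),Y_{t}(0))$ given $(X=x,D_{t}=1)$, together with A4, pins down $C_{\Delta Y_{t}(0),Y_{t-1}(0)|X=x,D_{t}=1}$ as a Fr\'echet--Hoeffding bound. This is false: comonotonicity gives $Y_{t}(0)=g(Y_{t-1}(0))$ for an increasing $g$, but $\Delta Y_{t}(0)=g(Y_{t-1}(0))-Y_{t-1}(0)$ is the difference $g(y)-y$, which need not be monotone, so the induced copula is a degenerate (functional) copula but generally not $\min\{u,v\}$ or $\max\{u+v-1,0\}$. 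The paper states this explicitly in the discussion following the corollary: the rank invariance condition ``does not imply conditional rank invariance between $\Delta Y_{it}(0)$ and $Y_{it-1}(0)$.'' The failure propagates to your final step as well. Even transporting the exact degenerate copula to the control group via A3 and using A2 for the marginal of $\Delta Y_{t}(0)$, one only obtains $\Delta Y_{it}(0)=h\big(F^{-1}_{Y_{t-1}(0)|X=x,D_{t}=1}\circ F_{Y_{t-1}(0)|X=x,D_{t}=0}(Y_{i,t-1}(0))\big)$ with $h(y):=g(y)-y$ on the control group; adding back $Y_{i,t-1}(0)$ then gives a map of the form $y\mapsto g(q(y))-q(y)+y$ with $q:=F^{-1}_{Y_{t-1}(0)|X=x,D_{t}=1}\circ F_{Y_{t-1}(0)|X=x,D_{t}=0}$, which is not monotone increasing in general because $y-q(y)$ need not be monotone when the two initial-period marginals differ. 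Hence comonotonicity of $(Y_{t-1}(0),Y_{t}(0))$ on the control group --- which is exactly what the definition of $\widetilde{\Delta Y}_{it}$ requires --- does not follow from your argument, and the identity $\Delta Y_{it}=\widetilde{\Delta Y}_{it}$ is not established. The step you flagged as the ``main obstacle'' is in fact where the argument breaks.
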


\vspace{0.5cm}


The additional assumption of conditional rank invariance says that, for observations with the same observed covariates, individuals maintain their rank in the distribution of outcomes over time.  This assumption is weaker than unconditional rank invariance as some individuals can change their rank in the distribution of earnings over time.  It does not imply nor is implied by the Copula Invariance assumption, nor does it imply conditional rank invariance between $\Delta Y_{it}(0)$ and $Y_{it-1}(0)$.


\section{Estimation and Inference}

As in the previous section, the counterfactual distribution is identified by
distributions of observed outcomes conditional on covariates and treatment
status. In this section, we first explain an estimation procedure based on
conditional empirical distributions and then provide asymptotic results for
the proposed estimator using the functional delta method. 
We develop uniform inference results using techniques from the literature on
empirical processes \citep[see, for example,][]{VW1996}.
In this section, we consider the case where all covariates are discrete, which allows
for nonparametric estimation that does not suffer from the curse of
dimensionality. The estimation with discrete regressors is also considered
in \cite{ChernozhukovEtAl2013Emtca} and \cite{graham2015quantile}.

\subsection{Estimation}

We estimate the conditional distribution $F_{Y_{s}|X=x, D_{t}=d}$ of
observed outcome $Y_{is}$ given covariates $X_{i}=x$ and treatment status $%
D_{it}=d$ by using the corresponding empirical distribution. For 
$d \in \{0, 1\}$, let $\delta_{i, x}^{(d)}:= 1\{X_{i} = x,
D_{it}=d\}$ and $n_{x}^{(d)} = \sum_{i=1}^{n}\delta_{i, x}^{(d)}$. Then,
the estimator of $F_{Y_{s}|X = x, D_{t}=d}$ is given by, for $s\in \{t-1,t\}$ and $%
d \in \{0, 1\}$, 
\begin{eqnarray}  \label{eq:CDF-0}
\hat{F}_{Y_{s}|X = x, D_{t}=d}(y) := \frac{ 1 }{ n_{x}^{(d)} }
\sum_{i=1}^{n} 1 \big \{ Y_{is} \le y \big \} \delta_{i, x}^{(d)}.
\end{eqnarray}
We denote an estimator for $F_{Y_{t}(1)|X=x, D_{t}=1}(y)$ by $\hat{F}%
_{Y_{t}(1)|X=x, D_{t}=1}(y)$, which is given by the empirical distribution $%
\hat{F}_{Y_{t}|X=x, D_{t}=1}(y)$ because $Y_{it} = Y_{it}(1)$ if $D_{it}=1$.
For estimation of the counterfactual distribution provided in Theorem \ref%
{theorem:identification-1}, we obtain estimated quantiles $\hat{F}_{
Y_{t-1}|X=x, D_{t}=1}^{-1}$ from the empirical distribution $\hat{F}_{
Y_{t-1}|X=x, D_{t}=1}$ and then set 
\begin{eqnarray}  \label{eq:est-main}
\hat{F}_{Y_{t}(0)|X = x, D_{t}=1}(y) := \frac{ 1 }{ n_{x}^{(0)} }
\sum_{i=1}^{n} 1 \big \{ \Delta Y_{it} + \hat{F}_{ Y_{t-1}|X=x,
D_{t}=1}^{-1} \circ \hat{F}_{Y_{ t-1}|X = x, D_{t}=0} ( Y_{i, t-1} ) \le y %
\big \} \delta_{i, x}^{(0)},
\end{eqnarray}
for $y \in \mathbb{R}$. We use estimated distribution functions $\hat{F}_{
Y_{t}(1)|X=x, D_{t}=1}$ and $\hat{F}_{ Y_{t}(0)|X=x, D_{t}=1}$ to obtain
quantiles for each distribution. Then, the CQTT estimator is given by 
\begin{eqnarray*}
\hat{\Delta}_{x}^{QTT}(\tau) := \hat{F}_{Y_{t}(1)|X=x, D_{t}=1}^{-1}(\tau)
- \hat{F}_{Y_{t}(0)|X = x, D_{t}=1}^{-1}(\tau),
\end{eqnarray*}
for $(\tau, x) \in \mathcal{T} \times \mathcal{X}$.

\subsection{Asymptotic Results}

We provide a functional central limit theorem for the CQTT estimator over $%
\mathcal{T}$, where $\mathcal{T}$ is assumed to be a compact subset strictly
within the unit interval. We begin with a preliminary result on weak
convergence of empirical distributions, which facilitates the use of the
functional delta method with Hadamard differentiable maps. In what follows,
we denote by $\mathcal{Y}_{s|x, d}:=\supp(Y_{is}|X_{i}=x, D_{it}=d)$ and $%
\mathcal{Y}_{s|x, 1}(j):=\supp(Y_{is}(j)|X_{i}=x, D_{it}=1)$ for $s\in\{t-1, t\}$, 
$d \in\{ 0,1\}$ and $j \in \{0,1\}$.

For each $(s, d, x) \in \{t-1, t\} {\times} \{0, 1\} {\times} \mathcal{X}$, define empirical
processes as 
\begin{eqnarray*}
\hat{G}_{s, x}^{(d)}(y) := \sqrt{n} \big ( \hat{F}_{Y_{s}|X=x, D_{t}=d}(y) -
F_{Y_{s}|X=x, D_{t}=d}(y) \big ), \ \ \ y \in \mathcal{Y}_{s|x, d}.
\end{eqnarray*}
Let $ \tilde{Y}_{it}  :=  \Delta Y_{it}  +  F_{ Y_{t-1}|X= x, D_{t}=1}^{-1} 
\circ  F_{Y_{t-1}|X=x, D_{t}=0}  ( Y_{i,t-1} ) $ and we use $\tilde{F}%
_{Y_{t}(0)|X=x, D_{t}=1}(y)$ to denote the infeasible estimator for the
counterfactual distribution based on observations $\{\tilde{Y}_{it}\}$ with $%
X_{i}=x$ and $D_{it}=1$ as in (\ref{eq:CDF-0}). Define its empirical
process as 
\begin{eqnarray}  \label{eq:ep-tilde}
\tilde{G}_{t, x}^{(0)}(y) := \sqrt{n} \big ( \tilde{F}_{Y_{t}(0)|X=x,
D_{t}=1}(y) - F_{Y_{t}(0)|X=x, D_{t}=1}(y) \big ), \ \ \ \ \ y \in \mathcal{Y%
}_{t|x, 1}(0).
\end{eqnarray}
We make an additional assumption.

\vspace{0.5cm} \noindent \textbf{Assumption A6}. (a) A pair of random
variables $(\Delta Y_{it}, Y_{i,t-1})$ is continuously distributed
conditional on $X_{i}$ and $D_{it} = 0$ over a compact support with a
distribution $F_{\Delta Y_{t}, Y_{t-1}|X, D_{t} = 0}$ and a density $%
f_{\Delta Y_{t}, Y_{t-1}|X, D_{t} = 0}$. A random variable $\Delta Y_{it}$
is continuously distributed conditional on $Y_{i, t-1}$, $X_i$, and $D_{it}
= 0$ with a uniformly continuous density $f_{\Delta Y_{t}|Y_{t-1}, X,
D_{t}=0}$ over a compact support. (b) The sample sizes $n_{x}^{(0)}$ and $%
n_{x}^{(1)}$ go to $\infty$ as $n \to \infty$, while $r_{x}^{(j)}:= \lim_{n
\to \infty} (n /n_{x}^{(j)})^{1/2} \in (0, \infty)$ for $j = 0, 1$. \vspace{%
0.5cm} 

The following lemma provides a functional central limit theorem for the
empirical processes above. We define $\mathbb{S}_{x}  :=  \ell^{\infty}  %
\big( \mathcal{Y}_{t|x,1}(0) \big)
{\times}  \ell^{\infty}  ( \mathcal{Y}_{t-1|x, 0} )  {\times}  \ell^{\infty}
( \mathcal{Y}_{t|x, 1} )  {\times}  \ell^{\infty}  ( \mathcal{Y}_{t-1|x, 1}
) $ for a fixed $x \in \mathcal{X}$, where $\ell^{\infty}(S)$ denotes the
space of all uniformly bounded functions on some set $S$, equipped with
supremum norm $\|\cdot\|_{\infty}$.

\vspace{0.5cm}

\begin{lemma}
\label{lemma:asym-basic}  Suppose that Assumption A1-A6 hold. Then,  for
each $x \in \mathcal{X}$,  
\begin{eqnarray*}
\big ( \tilde{G}_{t, x}^{(0)}, \hat{G}_{t-1, x}^{(0)}, \hat{G}_{t, x}^{(1)}, 
\hat{G}_{t-1, x}^{(1)} \big ) \rightsquigarrow \big ( \mathbb{V}_{x}^{(0)}, 
\mathbb{W}_{x}^{(0)}, \mathbb{V}_{x}^{(1)}, \mathbb{W}_{x}^{(1)} \big ),
\end{eqnarray*}
in the space  $\mathbb{S}_{x}$.  Here,  $ \big (
\mathbb{V}_{x}^{(0)},  \mathbb{W}_{x}^{(0)},  \mathbb{V}_{x}^{(1)},  \mathbb{%
W}_{x}^{(1)}  \big )
$  is a tight Gaussian process with mean zero and covariance kernel  $%
\mathrm{diag}\{\Sigma_{x}^{(0)}(\cdot, \cdot), \Sigma_{x}^{(1)}(\cdot,
\cdot)\}$  defined on $\mathbb{S}_{x}$,  where  $\Sigma_{x}^{(j)}(\cdot,
\cdot)$  is the $2 \times 2$  positive definite,  covariance kernel of  $(%
\mathbb{V}^{(j)}, \mathbb{W}^{(j)})$  for $j=0,1$,  given by,  for $(y_{1},
y_{2}, y_{3}, y_{4}) \in \mathbb{S}_{x}$,  
\begin{eqnarray*}
\Sigma_{x}^{(0)}(y_{1}, y_{2}) := \var_{x}^{(0)} \big ( I_{it}^{(0)} (y_{1},
y_{2}) \big ) \ \ \ \ \mathrm{and} \ \ \ \ \Sigma_{x}^{(1)}(y_{3}, y_{4}) := %
\var_{x}^{(1)} \big ( I_{it}^{(1)}(y_{3}, y_{4}) \big ),
\end{eqnarray*}
with  $I_{it}^{(0)}(y_{1}, y_{2}):=  (1\{ \tilde{Y}_{it} \le y_{1}\},1\{
Y_{i,t-1}\le y_{2}\})^{\prime } $,  $I_{it}^{(1)}(y_{3}, y_{4}):=  (  1\{
Y_{it} \le y_{3}\},  1\{ Y_{i,t-1}\le y_{4}\}  )^{\prime } $
and  $\var%
_{x}^{(j)}$ being the covariance function conditional on  $X_{i} =x$  and  $%
D_{it} = j$.
\end{lemma}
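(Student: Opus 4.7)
The plan is to prove joint weak convergence of the four-tuple $(\tilde{G}_{t,x}^{(0)},\hat{G}_{t-1,x}^{(0)},\hat{G}_{t,x}^{(1)},\hat{G}_{t-1,x}^{(1)})$ by (i) stratifying the sample according to $(X_i,D_{it})$, (ii) applying an empirical-process functional central limit theorem to a Donsker class of indicator functions within each stratum, and (iii) combining the two strata using the independence between the treated and untreated subsamples. All four processes are empirical processes indexed by classes of indicator functions of a real-valued random variable with a continuous conditional distribution; each such class is VC-subgraph with envelope $1$, hence uniformly Donsker. For the infeasible process $\tilde{G}_{t,x}^{(0)}$, the relevant variable is $\tilde{Y}_{it}=\Delta Y_{it}+g(Y_{i,t-1})$ with the deterministic, strictly increasing map $g:=F_{Y_{t-1}|X=x,D_{t}=1}^{-1}\circ F_{Y_{t-1}|X=x,D_{t}=0}$; by Assumption A6(a), $\tilde{Y}_{it}$ has a continuous conditional distribution given $(X_i=x,D_{it}=0)$, so the associated indicator class is Donsker.

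Within the stratum $\{X_i=x,D_{it}=0\}$, the $n_x^{(0)}$ observations satisfying $\delta_{i,x}^{(0)}=1$ are i.i.d.\ draws from the corresponding conditional law by Assumption A1, and Theorem \ref{theorem:identification-1} gives $F_{Y_t(0)|X=x,D_t=1}(y)=\mathbb{E}[1\{\tilde{Y}_{it}\le y\}\mid X_i=x,D_{it}=0]$, so recentering by this population CDF is standard empirical centering. Applying the empirical process CLT over the product Donsker class $\{(1\{\tilde{Y}_{it}\le y_1\},1\{Y_{i,t-1}\le y_2\})':(y_1,y_2)\in\mathcal{Y}_{t|x,1}(0)\times\mathcal{Y}_{t-1|x,0}\}$ yields joint weak convergence of the $\sqrt{n_x^{(0)}}$-scaled pair to a tight mean-zero Gaussian process with covariance kernel $\var_x^{(0)}(I_{it}^{(0)}(\cdot,\cdot))$ in $\ell^{\infty}(\mathcal{Y}_{t|x,1}(0))\times\ell^{\infty}(\mathcal{Y}_{t-1|x,0})$. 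An identical argument in the stratum $\{X_i=x,D_{it}=1\}$ delivers the weak limit of $(\hat{G}_{t,x}^{(1)},\hat{G}_{t-1,x}^{(1)})$ with covariance $\var_x^{(1)}(I_{it}^{(1)}(\cdot,\cdot))$.

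To obtain the $\sqrt{n}$-scaling required by the lemma, I multiply by $\sqrt{n/n_x^{(d)}}\to r_x^{(d)}\in(0,\infty)$ from Assumption A6(b) and apply continuous mapping, yielding the Gaussian limits with the stated kernels. Because the index sets $\{\delta_{i,x}^{(0)}=1\}$ and $\{\delta_{i,x}^{(1)}=1\}$ are disjoint and the sample is i.i.d.\ across $i$, the two blocks of empirical sums are independent, producing the block-diagonal covariance $\mathrm{diag}\{\Sigma_x^{(0)},\Sigma_x^{(1)}\}$. Joint weak convergence in the product space $\mathbb{S}_x$ follows by combining the two independent tight limits.

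The only mildly delicate step is verifying the Donsker property of $\tilde{G}_{t,x}^{(0)}$: the variable $\tilde{Y}_{it}$ is built from the deterministic composition $g$ of population CDFs, which contributes no estimation noise at this stage. The additional error from replacing $g$ by its empirical analogue is not needed in this preliminary lemma; it is the object of the subsequent asymptotic distribution result, where it is handled via Hadamard differentiability and the functional delta method. Once this separation is recognized, the proof reduces to the textbook empirical process CLT for indicator classes applied to two independent i.i.d.\ subsamples.
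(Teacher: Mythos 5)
Your proposal is correct and takes essentially the same route as the paper, whose proof of this lemma is a one-line appeal to the functional central limit theorem for empirical distribution functions (Chapter 2 of van der Vaart and Wellner, 1996); you simply supply the details that citation suppresses, including the key observations that $\tilde{Y}_{it}$ involves only the deterministic population map $g$ at this stage and that Theorem \ref{theorem:identification-1} justifies centering $\tilde{F}_{Y_t(0)|X=x,D_t=1}$ at the counterfactual distribution. The only caveat is the normalization bookkeeping: the paper's stated covariance kernels $\var_x^{(j)}(I_{it}^{(j)})$ correspond to the $\sqrt{n_x^{(j)}}$-scaled processes (with the factors $r_x^{(j)}$ attached later in Proposition \ref{proposition:F-convergence}), so your final multiplication by $\sqrt{n/n_x^{(d)}}$ would rescale the kernels by $(r_x^{(d)})^2$ — an inconsistency already present in the paper's own statement rather than a flaw in your argument.
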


\vspace{0.5cm} 

Using the result in this lemma, we first obtain the joint limiting process
for the estimator of the potential outcome distributions
$(\hat{F}_{Y_{t}(0)| X=x, D_{t}=1}, \hat{F}_{Y_{t}(1)|X=x, D_{t}=1})$. It is
straightforward from the above result to obtain the limit process for the
distribution $\hat{F}_{Y_{t}(1)| X=x, D_{t}=1}$, which is identified
directly from data; whereas, the one for the counterfactual distribution $%
\hat{F}_{Y_{t}(0)| X=x, D_{t}=1}$ needs several steps. Since the estimator
for the counterfactual distribution in (\ref{eq:est-main}) can be considered
as a process indexed by functions depending on estimated distributions, we
use recent results for empirical processes in \cite{VW2007IMS} with some
modifications in order to obtain the limiting process as formalized by the
following proposition.

\vspace{0.5cm}

\begin{proposition}
\label{proposition:F-convergence}  Define  $\hat{Z}_{x}^{(j)}(y)  :=  \sqrt{n%
}  \big (
\hat{F}_{Y_{t}(j)|X=x, D_{t}=1}(y)  -  F_{Y_{t}(j)|X=x, D_{t}=1}(y)  \big )
$  for each  $x \in \mathcal{X}$,  $j = 0,1$  and  $y \in \mathcal{Y}_{t|x,
1}(j)$.  Suppose that Assumption A1-A6 hold.  Then,  
\begin{eqnarray*}
\big ( \hat{Z}_{x}^{(0)}, \hat{Z}_{x}^{(1)} \big ) \rightsquigarrow \big ( 
\mathbb{Z}_{x}^{(0)}, \mathbb{Z}_{x}^{(1)} \big ),
\end{eqnarray*}
in the metric space  $\ell^{\infty}\big ( \mathcal{Y}_{t|x, 1}(0) \big)
{\times}  \ell^{\infty}\big ( \mathcal{Y}_{t|x, 1}(1) \big)
$.  Here,  $(\mathbb{Z}_{x}^{(0)}, \mathbb{Z}_{x}^{(1)})$  is a tight
zero-mean Gaussian process with a.s.~uniformly continuous  paths on $%
\mathcal{Y}_{t|x, 1}(0) {\times} \mathcal{Y}_{t|x, 1}(1)$, given by  
\begin{eqnarray*}
\mathbb{Z}_{x}^{(0)} := r_{x}^{(0)}\mathbb{V}_{x}^{(0)} + \kappa_{x}( 
\mathbb{W}_{x}^{(0)}, \mathbb{W}_{x}^{(1)} ) \ \ \ \ \ \mathrm{and} \ \ \ \ \
\mathbb{Z}_{x}^{(1)} = r_{x}^{(1)} \mathbb{V}_{x}^{(1)},
\end{eqnarray*}
where  the map  $\kappa_{x}:  \ell^{\infty}(\mathcal{Y}_{t-1|x,0})  {\times}
\ell^{\infty}(\mathcal{Y}_{t-1|x,1})  \mapsto  \ell^{\infty}(\mathcal{Y}%
_{t|x,1}(0))  $ is given by  
\begin{eqnarray*}
\kappa_{x}(W_{0}, W_{1}) := \int \big \{ r_{x}^{(0)} W_0(v) - r_{x}^{(1)}
W_{1} \circ F_{Y_{t-1}|X=x, D_{t}=1}^{-1} \circ F_{Y_{t-1}|X=x, D_{t}=0}(v) %
\big \} \omega_{x}(y,v) d v,
\end{eqnarray*}
for  $(W_{0}, W_{1}) \in  \ell^{\infty}(\mathcal{Y}_{t-1|x,0})  {\times} 
\ell^{\infty}(\mathcal{Y}_{t-1|x, 1})  $  with  
\begin{eqnarray*}
\omega_{x}(y, v):= \frac{ f_{\Delta Y_{t}, Y_{t-1}|X=x, D_{t}=0} \big ( y -
F_{Y_{t-1}|X=x, D_{t}=1}^{-1} \circ F_{Y_{t-1}|X=x, D_{t}=0}(v), v \big ) }{
f_{Y_{t-1}|X=x, D_{t}=1} \circ F_{Y_{t-1}|X=x, D_{t}=1}^{-1} \circ
F_{Y_{t-1}|X=x, D_{t}=0}(v) },
\end{eqnarray*}
for  $(y, v) \in \mathcal{Y}_{t|x,1}(0) {\times} \mathcal{Y}_{t-1|x,0}$.
\end{proposition}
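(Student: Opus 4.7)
The plan is to decompose $\hat Z_x^{(0)}$ into an infeasible ``oracle'' piece that Lemma~\ref{lemma:asym-basic} handles directly, plus a plug-in remainder driven by the nuisance estimators $\hat F_{Y_{t-1}|X=x,D_t=0}$ and $\hat F_{Y_{t-1}|X=x,D_t=1}^{-1}$, then linearize that remainder via the functional delta method and assemble the pieces by continuous mapping.

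The treated-arm conclusion is immediate: because $Y_{it}=Y_{it}(1)$ whenever $D_{it}=1$, the estimator $\hat F_{Y_t(1)|X=x,D_t=1}$ coincides with the raw empirical CDF $\hat F_{Y_t|X=x,D_t=1}$, so $\hat Z_x^{(1)}\equiv\hat G_{t,x}^{(1)}\rightsquigarrow r_x^{(1)}\mathbb V_x^{(1)}$ by Lemma~\ref{lemma:asym-basic}. For the counterfactual arm, write
\begin{align*}
\hat Z_x^{(0)}(y) \;=\; \tilde G_{t,x}^{(0)}(y) \;+\; R_n(y), \qquad R_n(y):=\sqrt n\bigl(\hat F_{Y_t(0)|X=x,D_t=1}(y)-\tilde F_{Y_t(0)|X=x,D_t=1}(y)\bigr).
\end{align*}
Lemma~\ref{lemma:asym-basic} delivers $\tilde G_{t,x}^{(0)}\rightsquigarrow r_x^{(0)}\mathbb V_x^{(0)}$ jointly with $(\hat G_{t-1,x}^{(0)},\hat G_{t-1,x}^{(1)})\rightsquigarrow (\mathbb W_x^{(0)},\mathbb W_x^{(1)})$, leaving only $R_n$ to control.

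View $\hat F_{Y_t(0)|X=x,D_t=1}(y)$ as a plug-in evaluation of the map $(F_0,Q_1)\mapsto \phi_y(F_0,Q_1)$ whose population form is
\begin{align*}
\phi_y(F_0,Q_1) \;=\; \int F_{\Delta Y_t\mid Y_{t-1},X=x,D_t=0}\!\bigl(y-Q_1\circ F_0(v)\bigm| v\bigr)\,f_{Y_{t-1}\mid X=x,D_t=0}(v)\,dv,
\end{align*}
so that $F_{Y_t(0)|X=x,D_t=1}(y)=\phi_y(F_{Y_{t-1}|X=x,D_t=0},F_{Y_{t-1}|X=x,D_t=1}^{-1})$ by Theorem~\ref{theorem:identification-1}. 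Differentiating $\phi_y$ under the integral sign (legitimate by the continuous, uniformly bounded conditional density of $\Delta Y_{it}$ given $Y_{i,t-1}$ in Assumption~A6), and then composing with the Hadamard derivative of the quantile operator $F\mapsto F^{-1}$, yields a Hadamard derivative whose kernel is exactly $\omega_x(y,v)$ and which acts on a pair of directions on the $t-1$ marginals. Evaluating this derivative at the Lemma~\ref{lemma:asym-basic} limits $(r_x^{(0)}\mathbb W_x^{(0)},r_x^{(1)}\mathbb W_x^{(1)})$ produces precisely $\kappa_x(\mathbb W_x^{(0)},\mathbb W_x^{(1)})$.

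The last step is to justify replacing the empirical-measure version of $\phi_y$ (which is what $\hat F_{Y_t(0)|X=x,D_t=1}(y)$ actually is) by its population counterpart along the $n^{-1/2}$-perturbed nuisance path $\hat g:=\hat Q_1\circ\hat F_0\to g:=Q_1\circ F_0$, which is where the main obstacle lies: the inner indicator $1\{\Delta y + \hat g(Y_{i,t-1})\le y\}$ is discontinuous in $\hat g$, so one cannot simply differentiate pathwise. I would invoke the preservation theorems of \cite{VW2007IMS} for function-indexed empirical processes with a shrinking perturbation parameter, whose hypotheses (compact support, uniformly bounded continuous densities, shrinking $L^2$ envelope) are exactly what Assumption~A6 supplies, to conclude that this substitution error is $o_p(1)$ uniformly in $y\in\mathcal Y_{t|x,1}(0)$. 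Joint weak convergence of $(\hat Z_x^{(0)},\hat Z_x^{(1)})$ to $\bigl(r_x^{(0)}\mathbb V_x^{(0)}+\kappa_x(\mathbb W_x^{(0)},\mathbb W_x^{(1)}),\,r_x^{(1)}\mathbb V_x^{(1)}\bigr)$ then follows because the $D_{it}=0$ and $D_{it}=1$ subsamples are independent and the map from the Lemma~\ref{lemma:asym-basic} processes to $(\hat Z_x^{(0)},\hat Z_x^{(1)})$ is continuous at the truth; tightness with a.s.\ uniformly continuous paths on $\mathcal Y_{t|x,1}(0)\times\mathcal Y_{t|x,1}(1)$ is inherited from the uniform continuity of $\omega_x$ over the compact supports.
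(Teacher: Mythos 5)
Your proposal is correct and follows essentially the same route as the paper: you decompose $\hat Z_x^{(0)}$ into the infeasible process $\tilde G_{t,x}^{(0)}$ plus a plug-in remainder, linearize that remainder via the Hadamard derivative of the map $(F_0,Q_1)\mapsto\Pr\{\Delta Y + Q_1\circ F_0(Y_{t-1})\le y\}$ (the paper's Lemmas \ref{lemma:H-diff-1}--\ref{lemma:H-diff-2}), and handle the discontinuous-indicator substitution error with the \cite{VW2007IMS}-type stochastic equicontinuity argument, which is precisely the content of the paper's Lemma \ref{lemma:zero} and its use in Lemma \ref{lemma:l-approx}. The final assembly by joint convergence and the (extended) continuous mapping theorem matches the paper's concluding step.
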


\vspace{0.5cm} 

This proposition shows that the limiting process $\mathbb{Z}_{x}^{(0)}$ for
the counterfactual distribution has an extra term depending on the map $%
\kappa_{x}$, which reflects our identification argument of the
counterfactual distribution of interest as well as the contribution of
estimation errors from empirical distributions. Thus the limiting
distribution is not nuisance parameter free, and a bootstrap procedure can
facilitate statistical inference in practice as shown in the next subsection.

Next we present the limiting process of the CQTT estimators over a range of
quantile $\mathcal{T}$. Proposition \ref{proposition:F-convergence} together
with the functional delta method delivers the following theorem.

\vspace{0.5cm}

\begin{theorem}
\label{theorem:Q-convergence}  Suppose that Assumption A1-A6 hold.  If  $%
F_{Y_{t}(0)|X, D_{t}=1}$  admits a positive continuous density  $%
f_{Y_{t}(0)|X, D_{t}=1}$  on an interval $[a, b]$ containing an $\epsilon$%
-enlargement  of the set  $\{ F_{Y_{t}(0)|X, D_{t}=1}^{-1}(\tau): \tau \in 
\mathcal{T} \}  \subset \mathcal{Y}_{t|X,1}(0)$  with $\mathcal{T} \subset
(0,1)$,  then, for each $x \in \mathcal{X}$,  
\begin{eqnarray*}
\sqrt{n} \big ( \hat{\Delta}_{x}^{QTT}(\tau) - \Delta_{x}^{QTT}(\tau) \big ) %
\rightsquigarrow \bar{\mathbb{Z}}_{x}^{(1)}(\tau) - \bar{\mathbb{Z}}%
_{x}^{(0)}(\tau),
\end{eqnarray*}
where  $\big (  
\bar{\mathbb{Z}}_{x}^{(0)}(\tau),  \bar{\mathbb{Z}}_{x}^{(1)}(\tau)  \big )
$  is a stochastic process in the metric space  $(\ell^{\infty}(\mathcal{T})
)^2$,  given by  
\begin{eqnarray*}
\bar{\mathbb{Z}}_{x}^{(j)}(\tau) := \frac{ \mathbb{Z}_{x}^{(j)} \big ( %
F_{Y_{t}(j)|X=x, D_{t}=1}^{-1}(\tau) \big ) }{ f_{Y_{t}(j)|X=x, D_{t}=1} %
\big ( F_{Y_{t}(j)|X=x, D_{t}=1}^{-1}(\tau) \big ) },
\end{eqnarray*}
for $j = 0, 1$.
\end{theorem}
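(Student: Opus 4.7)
The plan is to deduce Theorem 2 from Proposition 1 by a standard application of the functional delta method together with the Hadamard differentiability of the quantile (inverse) map. The reduction is natural: Proposition 1 already delivers the joint weak convergence of the estimated potential outcome distributions at $\sqrt{n}$ rate to the tight Gaussian process $(\mathbb{Z}_x^{(0)}, \mathbb{Z}_x^{(1)})$ with a.s.\ uniformly continuous sample paths, and the CQTT is simply the difference of the two quantile functions. The heavy empirical-process work that produces the nonstandard kernel $\kappa_x$ in $\mathbb{Z}_x^{(0)}$ has already been done, so nothing further is needed at that level.

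First I would restate the conclusion of Proposition 1 as joint weak convergence of $\sqrt{n}\bigl(\hat F_{Y_t(j)|X=x,D_t=1}-F_{Y_t(j)|X=x,D_t=1}\bigr)$, $j=0,1$, in the product space $\ell^\infty(\mathcal{Y}_{t|x,1}(0)) \times \ell^\infty(\mathcal{Y}_{t|x,1}(1))$, and emphasise that the limit takes values almost surely in the subspace of continuous functions. Second, I would invoke the Hadamard differentiability of the quantile map $\phi:F\mapsto F^{-1}$ tangentially to $C[a,b]$, in the form of Lemma~3.9.23 of van der Vaart and Wellner (1996). The hypothesis of Theorem 2 that $F_{Y_t(0)|X,D_t=1}$ admits a positive continuous density on an interval $[a,b]$ containing an $\epsilon$-enlargement of $\{F^{-1}_{Y_t(0)|X,D_t=1}(\tau):\tau\in\mathcal{T}\}$, together with the analogous property for $Y_t(1)$ granted by Assumption A4, is exactly what this lemma requires. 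The Hadamard derivative at $F$ acts on $h\in C[a,b]$ by
$$h \mapsto -\frac{h \circ F^{-1}}{f \circ F^{-1}}.$$

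Third, the functional delta method (Theorem~3.9.4 of VW1996) applied jointly to the two quantile maps yields
$$\sqrt{n}\Bigl(\hat F^{-1}_{Y_t(j)|X=x,D_t=1}(\tau)-F^{-1}_{Y_t(j)|X=x,D_t=1}(\tau)\Bigr) \rightsquigarrow -\bar{\mathbb{Z}}_x^{(j)}(\tau),\qquad j=0,1,$$
jointly in $\ell^\infty(\mathcal{T})$. Continuous mapping under subtraction then gives $\sqrt{n}\bigl(\hat\Delta_x^{QTT}-\Delta_x^{QTT}\bigr)\rightsquigarrow -\bar{\mathbb{Z}}_x^{(1)}+\bar{\mathbb{Z}}_x^{(0)}$ in $\ell^\infty(\mathcal{T})$. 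Because $(\bar{\mathbb{Z}}_x^{(0)},\bar{\mathbb{Z}}_x^{(1)})$ is a centred Gaussian process, its law is invariant under negation, and hence the limit has the same distribution as $\bar{\mathbb{Z}}_x^{(1)}(\tau)-\bar{\mathbb{Z}}_x^{(0)}(\tau)$, matching the statement.

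The main obstacle, such as it is, is bookkeeping rather than substance: verifying that $\mathcal{T}$ is a compact subset strictly inside $(0,1)$ so that the quantile set lies in the interior of the $\epsilon$-enlargement, that the densities $f_{Y_t(j)|X=x,D_t=1}$ are bounded away from zero uniformly on that enlargement (given by the theorem's hypothesis combined with Assumption A4), and that the limit of Proposition 1 lives in the relevant tangent space (a.s.\ uniform continuity is already stated there). Once these ingredients are lined up, the conclusion is immediate from the delta method and continuous mapping.
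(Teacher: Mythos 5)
Your proposal is correct and follows essentially the same route as the paper: apply the Hadamard differentiability of the quantile map together with the functional delta method to the joint weak convergence established in Proposition \ref{proposition:F-convergence}, then subtract. You are in fact slightly more careful than the paper's own (very terse) proof, which writes the limit of the quantile process without the minus sign from the Hadamard derivative $h\mapsto -h\circ F^{-1}/(f\circ F^{-1})$; as you note, this is immaterial because the joint law of the centred Gaussian limit is invariant under negation.
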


\vspace{0.5cm} 

Using the result in Proposition \ref{proposition:F-convergence}
with a similar argument used in Theorem \ref{theorem:Q-convergence},
one could also consider other plug-in estimators of
Hadamard differentiable functionals, such as Lorenz curve and Gini
coefficient and obtain their limit processes. 
Also, we can consider testing that $\Delta_x^{QTT}(\tau) = 0$ for all $\tau \in 
\mathcal{T}$ using the Kolmogorov-Smirnov (KS) test statistic given by 
$
KS_x := \sqrt{n} \sup_{\tau \in \mathcal{T}} \big | \hat{\Delta}%
^{QTT}_x(\tau) \big |.
$
The next corollary states this result; it follows directly from Theorem 2
and the continuous mapping theorem.

\vspace{0.5cm}

\begin{corollary}\label{corollary:ks}
Suppose that the conditions of Theorem 2 hold.  Under the null hypothesis $%
H_0: \Delta^{QTT}_x(\tau) = 0$ for all $\tau \in \mathcal{T}$,  we have, for
each $x \in \mathcal{X}$,  
\begin{align*}
KS_x \xrightarrow{d} \sup_{\tau \in \mathcal{T}} \big | \bar{\mathbb{Z}}%
_{x}^{(1)}(\tau) - \bar{\mathbb{Z}}_{x}^{(0)}(\tau) \big |.
\end{align*}
\end{corollary}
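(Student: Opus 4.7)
The plan is to apply the continuous mapping theorem to the weak convergence result already established in Theorem 2. Under the null hypothesis $H_0: \Delta_x^{QTT}(\tau) = 0$ for all $\tau \in \mathcal{T}$, the centered process in Theorem 2 reduces to $\sqrt{n}\,\hat{\Delta}_x^{QTT}(\tau)$ itself, so Theorem 2 directly gives
\begin{eqnarray*}
\sqrt{n}\,\hat{\Delta}_x^{QTT}(\cdot) \rightsquigarrow \bar{\mathbb{Z}}_x^{(1)}(\cdot) - \bar{\mathbb{Z}}_x^{(0)}(\cdot)
\end{eqnarray*}
in the metric space $\ell^{\infty}(\mathcal{T})$, equipped with the supremum norm.

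The next step is to introduce the functional $\Phi: \ell^{\infty}(\mathcal{T}) \to \mathbb{R}$ defined by $\Phi(g) := \sup_{\tau \in \mathcal{T}} |g(\tau)|$. By the reverse triangle inequality, $|\Phi(g_1) - \Phi(g_2)| \le \|g_1 - g_2\|_{\infty}$, so $\Phi$ is Lipschitz continuous (hence continuous) on all of $\ell^{\infty}(\mathcal{T})$. Noting that $KS_x = \Phi\bigl(\sqrt{n}\,\hat{\Delta}_x^{QTT}\bigr)$, the continuous mapping theorem for weak convergence in metric spaces (see, e.g., Theorem 1.3.6 of \citealp{VW1996}) yields
\begin{eqnarray*}
KS_x = \Phi\bigl(\sqrt{n}\,\hat{\Delta}_x^{QTT}\bigr) \xrightarrow{d} \Phi\bigl(\bar{\mathbb{Z}}_x^{(1)} - \bar{\mathbb{Z}}_x^{(0)}\bigr) = \sup_{\tau \in \mathcal{T}} \bigl|\bar{\mathbb{Z}}_x^{(1)}(\tau) - \bar{\mathbb{Z}}_x^{(0)}(\tau)\bigr|.
\end{eqnarray*}

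There is essentially no obstacle here beyond verifying the two observations above: that the null hypothesis reduces centering to zero and that the supremum-norm map is continuous. The only minor subtlety worth mentioning explicitly is that the limit process $\bar{\mathbb{Z}}_x^{(1)} - \bar{\mathbb{Z}}_x^{(0)}$ takes values in $\ell^{\infty}(\mathcal{T})$ with almost surely uniformly continuous sample paths (which follows from Theorem 2 combined with the positive continuous density assumption on $F_{Y_{t}(j)|X=x,D_{t}=1}$ ensuring the Hadamard-differentiable quantile map produces a tight Gaussian limit), so the supremum in the limit is well-defined and finite almost surely. This completes the argument.
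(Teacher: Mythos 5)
Your proposal is correct and follows essentially the same route as the paper, which simply invokes the continuous mapping theorem for this corollary; you have merely spelled out the two observations the paper leaves implicit (that the null reduces the centering to zero and that the supremum-norm functional is Lipschitz on $\ell^{\infty}(\mathcal{T})$). Nothing further is needed.
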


\vspace{0.5cm} 

In addition to testing for zero CQTT, one can also use this result to form asymptotic simulataneous $(1-\alpha)$\% confidence bands for the entire CQTT process.  The confidence bands are given by $( \hat{\Delta}^{QTT}_x - c_{1-\alpha}n^{-1/2}, \hat{\Delta}^{QTT}_x + c_{1-\alpha}n^{-1/2} )$ where $c_{1-\alpha}$ are critical values from the KS test.  In practice, the critical values can be obtained using the bootstrap.


\subsection{Bootstrap}

The limiting processes presented in the preceding section depend on unknown
nuisance parameters, some of which require nonparametric estimation and may
complicate inference in finite samples. To deal with the issue of nonpivotal
limit processes, we consider a resampling method called the exchangeable
bootstrap (see \cite{praestgaard-wellner-1993} and \cite{VW1996}). This
resampling scheme consistently estimates limit laws of relevant empirical
distributions and thus with the functional delta method consistently
estimates the limit process of the CQTT estimator.

For the resampling scheme, we introduce a vector of random weights $%
(W_{1}^{(d)}, \dots, W_{n}^{(d)})$ for $d=0 ,1$. To establish the validity
of the bootstrap, we assume that the random weights satisfy the following
conditions.

\vspace{0.5cm} \noindent \textbf{Assumption B}. For each $d \in \{0, 1\}$,
let $(W_{1}^{(d)}, \dots, W_{n}^{(d)})$ be an $n$-dimensional vector of
exchangeable, nonnegative random variables. The vectors $(W_{1}^{(0)},
\dots, W_{n}^{(0)})$ and $(W_{1}^{(1)}, \dots, W_{n}^{(1)})$ are independent
of the original sample as well as each other. The vectors of random weights,
depending on the size of each group, satisfy the following conditions: 
\begin{eqnarray*}
\max_{1 \le i \le n} E \big |W_{i}^{(d)} \big |^{2+\epsilon} < \infty, \ \ 
\overline{W}_{n, x}^{(d)} := \frac{ 1 }{ n_{x}^{(d)} } \sum_{i =1}^{n}
W_{i}^{(d)} \delta_{i,x}^{(d)} \to^p 1, \ \ \frac{ 1 }{ n_{x}^{(d)} }
\sum_{i =1}^{n} ( W_{i}^{(d)} - \overline{W}_{n, x}^{(d)} )^2
\delta_{i,x}^{(d)} \to^p 1, \ \ 
\end{eqnarray*}
for each $d \in \{0, 1\}$. \vspace{0.5cm} 

As \cite{VW1996} explain, this resampling scheme encompasses a variety of
bootstrap methods, such as the empirical bootstrap, subsampling, wild
bootstrap and so on. This condition is employed in \cite%
{chernozhukov-val-melly-2013} for inference of counterfactual distributions.
For the empirical application in this paper, we consider the empirical
bootstrap, which corresponds to the case where the vector of random weights $%
(W_{1}^{(d)}, \dots, W_{n}^{(d)})$ follows the multinomial distribution with
probabilities $\delta_{i,x}^{(d)} \cdot (1/n_{x}^{(d)}, \dots,
1/n_{x}^{(d)})$. Given each realization of random weights, we apply the estimation
procedure explained in the previous section and estimate the parameters of
interest. As we show below, the repetition of the bootstrap leads to
asymptotically valid inference. The other types of resampling methods, such
as weighted bootstrap or subsampling, also can be considered under the same
framework and shown to be valid. In our empirical application, the sample
size is moderate and our estimation procedure uses empirical distribution
functions and thus the empirical bootstrap is straightforward and
convenient. In the other applications, however, one might prefer weighted
bootstrap if the estimation procedure is time-consuming or subsampling if
the sample size is extremely large.

Given the random weights, we define the weighted bootstrap empirical
distribution as 
\begin{eqnarray*}
\hat{F}_{Y_{s}|X = x, D_{t}=d}^{\ast}(y) := \frac{ 1 }{ n_{x}^{(d)} }
\sum_{i=1}^{n} W_{i}^{(d)} 1 \big \{ Y_{is} \le y \big \} \delta_{i,
x}^{(d)},
\end{eqnarray*}
for $(s, d) \in \{ t-1, t\} \times \{ 0, 1 \}$. As in the previous subsection, the
bootstrap distribution of the treated potential outcome $\hat{F}%
_{Y_{t}(1)|X=x, D_{t}=1}^{\ast}$ is given by $\hat{F}_{Y_{t}|X=x,
D_{t}=1}^{\ast}$, while the bootstrap version of the counterfactual
distribution is given by 
\begin{eqnarray*}
\hat{F}_{Y_{t}(0)|X = x, D_{t}=1}^{\ast}(y) := \frac{ 1 }{ n_{x}^{(0)} }
\sum_{i=1}^{n} W_{i}^{(0)} 1 \big \{ \Delta Y_{it} + \hat{F}_{ Y_{t-1}|X=x,
D_{t}=1}^{ \ast -1} \circ \hat{F}_{Y_{ t-1}|X = x, D_{t}=0}^{\ast} ( Y_{i,
t-1} ) \le y \big \} \delta_{i, x}^{(0)},
\end{eqnarray*}
for $y \in \mathbb{R}$, where $\hat{F}_{ Y_{t-1}|X=x, D_{t}=1}^{\ast -1}$ is
the bootstrap version of the quantile function obtained through the
bootstrap empirical distribution $\hat{F}_{ Y_{t-1}|X=x, D_{t}=1}^{\ast }$.
The bootstrap version of the CQTT process given $X_{i}=x$ is given by 
\begin{eqnarray*}
\hat{\Delta}_{x}^{QTT \ast}(\tau) := \hat{F}_{Y_{t}(1)|X=x, D_{t}=1}^{\ast
-1}(\tau) - \hat{F}_{Y_{t}(0)|X=x, D_{t}=1}^{\ast -1}(\tau),
\end{eqnarray*}
for $\tau \in \mathcal{T}$, where $\hat{F}_{Y_{t}(j)|X=x,D_{t}=1}^{\ast
-1}(\tau)$ is the $\tau$th conditional quantile of the bootstrap empirical distribution $%
\hat{F}_{ Y_{t}(j)|X=x,D_{t}=1}^{\ast}$ of potential outcomes for $j = 0,1$.

For the validity of the resampling method explained above, we need to
introduce the notion of conditional weak convergence in probability,
following \cite{VW1996}. For some normed space $\mathbb{D}$, let $BL_{1}(%
\mathbb{D})$ denote the space of all Lipschitz continuous functions from $%
\mathbb{D}$ to $[-1,1]$. Given the original sample $\mathbf{D}_{n}$ with $n$
being the sample size, consider a random element $B_{n}^{\ast}:=g(\mathbf{D}%
_{n}, \mathbf{W}_{n})$ as a function of the original sample and the random
weight vector $\mathbf{W}_{n}$ generating the bootstrap draw. The bootstrap
law of $B_{n}^{\ast}$ is said to consistently estimate the law of some tight
random element $B$, or $B_{n}^{\ast} \rightsquigarrow^{p} B$ if 
\begin{eqnarray*}
\sup_{h \in BL_{1}(\mathbb{D})} \big | E_{\mathbf{W}_{n}}[h(B_{n}^{\ast})] -
E[h(B)] \big | \to^p 0,
\end{eqnarray*}
where $E_{\mathbf{W}_{n}}$ is the conditional expectation with respect to $%
\mathbf{W}_{n}$ given the original sample $\mathbf{D}_{n}$.

To state a preliminary result, we define empirical processes indexed by $%
\mathcal{Y}_{s|x,d}$ as 
\begin{eqnarray*}
\hat{G}_{s, x}^{(d)\ast} := \sqrt{n} \big ( \hat{F}_{Y_{s}|X=x,
D_{t}=d}^{\ast} - \hat{F}_{Y_{s}|X=x, D_{t}=d} \big ),
\end{eqnarray*}
for $(s, d) \in \{t-1, t\} \times \{0, 1\}$. Also, define an empirical process indexed
by $\mathcal{Y}_{t|x, 1}(0)$ as 
\begin{eqnarray*}
\tilde{G}_{t, x}^{(0) \ast} := \sqrt{n} \big ( \tilde{F}_{Y_{t}(0)|X=x,
D_{t}=1}^{\ast} - \tilde{F}_{Y_{t}(0)|X=x, D_{t}=1} \big ),
\end{eqnarray*}
where the empirical distribution is given by 
\begin{eqnarray*}
\tilde{F}_{Y_{t}(0)|X=x, D_{t}=1}^{\ast}(y) := \frac{ 1 }{ n_{x}^{(0)} }
\sum_{i=1}^{n} W_{i}^{(0)} 1 \big \{ \Delta Y_{it} + F_{ Y_{t-1}|X=x,
D_{t}=1}^{-1} \circ F_{Y_{t-1}|X=x, D_{t}=0} ( Y_{i,t-1} ) \le y \big \} %
\delta_{i, x}^{0}.
\end{eqnarray*}
The following lemma shows that a set of the empirical processes defined
above consistently estimates the tight random element defined in Lemma \ref%
{lemma:asym-basic}.

\vspace{0.5cm}

\begin{lemma}
\label{lemma:asym-basic-B}  Suppose that Assumption A1-A6 and B hold. Then,
for each $x \in \mathcal{X}$,  
\begin{eqnarray*}
\big ( \tilde{G}_{t, x}^{(0)\ast}, \hat{G}_{t-1, x}^{(0)\ast}, \hat{G}_{t,
x}^{(1)\ast}, \hat{G}_{t-1, x}^{(1)\ast} \big ) \rightsquigarrow^{p} \big ( 
\mathbb{V}_{x}^{(0)}, \mathbb{W}_{x}^{(0)}, \mathbb{V}_{x}^{(1)}, \mathbb{W}%
_{x}^{(1)} \big ),
\end{eqnarray*}
in  $\mathbb{S}_{x}$,  where the limit processes defined in  Lemma \ref%
{lemma:asym-basic}.
\end{lemma}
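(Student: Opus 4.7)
The plan is to reduce the lemma to the exchangeable bootstrap functional central limit theorem (for instance, Theorem 3.6.13 of \cite{VW1996}) applied separately to the treated and untreated subgroups, and then to assemble the four bootstrap processes via the independence structure built into Assumption B. Within each subgroup defined by $(X_i=x, D_{it}=d)$, Assumption A1 yields i.i.d. observations; the classes $\{1\{\cdot \le y\}: y \in \mathcal{Y}_{s|x,d}\}$ are VC with envelope $1$ and therefore Donsker. Assumption B supplies exactly the moment and calibration conditions on the random weights required by the exchangeable bootstrap CLT. Consequently each of $\hat{G}_{t-1,x}^{(0)\ast}$, $\hat{G}_{t,x}^{(1)\ast}$, and $\hat{G}_{t-1,x}^{(1)\ast}$ converges conditionally in probability to its Gaussian counterpart from Lemma \ref{lemma:asym-basic}.

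For $\tilde{G}_{t,x}^{(0)\ast}$, the essential observation is that
\begin{equation*}
\tilde{Y}_{it} = \Delta Y_{it} + F_{Y_{t-1}|X=x, D_t=1}^{-1} \circ F_{Y_{t-1}|X=x, D_t=0}(Y_{i,t-1})
\end{equation*}
is a deterministic function of the data, since the population CDFs are non-random. Thus $\{\tilde{Y}_{it}: \delta_{i,x}^{(0)} = 1\}$ is an i.i.d. sample within the untreated subgroup, the class $\{1\{\tilde{Y}_{it} \le y\}: y \in \mathcal{Y}_{t|x,1}(0)\}$ is again VC, and the same bootstrap CLT yields $\tilde{G}_{t,x}^{(0)\ast} \rightsquigarrow^{p} \mathbb{V}_x^{(0)}$. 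Applying the bootstrap CLT jointly to the product class $\{(1\{\tilde{Y}_{it} \le y_1\}, 1\{Y_{i,t-1} \le y_2\})\}$ on the untreated subgroup delivers the joint conditional limit $(\mathbb{V}_x^{(0)}, \mathbb{W}_x^{(0)})$ with the $2\times 2$ covariance kernel $\Sigma_x^{(0)}$ specified in Lemma \ref{lemma:asym-basic}.

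To lift this to the full joint convergence on $\mathbb{S}_x$, I would exploit two layers of independence. First, by Assumption B the weight vectors $(W_i^{(0)})$ and $(W_i^{(1)})$ are mutually independent and independent of the data. Second, the untreated-block processes are constructed only from observations with $\delta_{i,x}^{(0)} = 1$, while the treated block uses those with $\delta_{i,x}^{(1)} = 1$; these index sets are disjoint. Consequently, conditional on the original sample, the two bootstrap blocks are independent, reproducing the block-diagonal covariance $\mathrm{diag}\{\Sigma_x^{(0)}, \Sigma_x^{(1)}\}$. Joint conditional weak convergence in probability then follows by combining the two independent block-level bootstrap limits through the product form of the bounded-Lipschitz characterization of $\rightsquigarrow^{p}$.

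The main obstacle is verifying conditional weak convergence in probability in the sense of \cite{VW1996}'s BL-metric definition (as opposed to unconditional weak convergence or mere finite-dimensional consistency), while handling the slightly delicate normalization by $\sqrt{n}$ rather than $\sqrt{n_x^{(d)}}$ inside each subgroup. Once the composite indicator classes are verified to be Donsker with uniformly continuous limit paths (using the continuity assumptions A4 and A6), and the weight moment and calibration conditions in Assumption B are matched against those in Theorem 3.6.13 of \cite{VW1996}, the joint statement reduces to routine bookkeeping together with the independence argument above.
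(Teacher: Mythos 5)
Your proposal is correct and follows essentially the same route as the paper, whose proof simply invokes the exchangeable bootstrap functional central limit theorem (Theorem 3.6.13 of van der Vaart and Wellner, 1996) and omits the details. The additional steps you supply --- treating $\tilde{Y}_{it}$ as a fixed transformation of the data because the population CDFs are non-random, verifying the Donsker property of the indicator classes, matching Assumption B to the weight conditions, and assembling the block-diagonal limit from the independence of the two weight vectors and the disjoint index sets --- are exactly the bookkeeping the paper leaves implicit.
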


\vspace{0.5cm} 

Using this lemma, we first show that the exchangeable bootstrap provides a
way to consistently estimate limit process of a pair of empirical
distributions of potential outcomes. Subsequently we argue that the limit
process of the CQTT estimator can be estimated, using the functional delta
method for a Hadmard differentiable map. The result is summarized in the
following theorem.

\vspace{0.5cm}

\begin{theorem}
\label{theorem:F-convergence-B}  Define  $ \hat{Z}_{x}^{(j)\ast}(y)  :=  
\sqrt{n}  \big (
\hat{F}_{Y_{t}(j)|X=x, D_{t}=1}^{\ast}(y)  -  \hat{F}_{Y_{t}(j)|X=x,
D_{t}=1}(y)  \big )
$  
for $j \in \{0,1\}$, $x \in \mathcal{X}$  and  $y \in \mathcal{Y}%
_{t|x,1}(j)$.  Suppose that Assumption A1-A6 and B hold. Then,  
for each $x \in \mathcal{X}$,  
\begin{eqnarray*}
  (\hat{Z}_{x}^{(0)\ast}, \hat{Z}_{x}^{(1)\ast})  \rightsquigarrow^{p}  
  (\mathbb{Z}_{x}^{(0)},\mathbb{Z}_{x}^{(1)}),  
\end{eqnarray*}
and thus
the exchangeable bootstrap procedure consistently estimates  the law of the
limit stochastic process of the CQTT:  
\begin{eqnarray*}
\sqrt{n} \big ( \hat{\Delta}_{x}^{QTT \ast}(\tau) - \hat{\Delta}%
_{x}^{QTT}(\tau) \big ) \rightsquigarrow^{p} \bar{\mathbb{Z}}%
_{x}^{(1)}(\tau) - \bar{\mathbb{Z}}_{x}^{(0)}(\tau), \ \ \ \ \tau \in 
\mathcal{T}.
\end{eqnarray*}
\end{theorem}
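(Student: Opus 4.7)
The plan is to combine Lemma~\ref{lemma:asym-basic-B} with the functional delta method for the exchangeable bootstrap (e.g.\ Theorem~3.9.11 in \cite{VW1996}), exploiting the Hadamard differentiability already established in the proofs of Proposition~\ref{proposition:F-convergence} and Theorem~\ref{theorem:Q-convergence}. Once conditional weak convergence in probability of the pair $(\hat{Z}_{x}^{(0)\ast}, \hat{Z}_{x}^{(1)\ast})$ to $(\mathbb{Z}_{x}^{(0)}, \mathbb{Z}_{x}^{(1)})$ is established, the CQTT statement will follow by a second application of the delta method through the Hadamard-differentiable quantile map.

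For the treated potential outcome, $\hat{F}_{Y_{t}(1)|X=x,D_{t}=1}^{\ast} = \hat{F}_{Y_{t}|X=x,D_{t}=1}^{\ast}$, so $\hat{Z}_{x}^{(1)\ast} = r_{x}^{(1)}\hat{G}_{t,x}^{(1)\ast} + o_{p}(1)$ and Lemma~\ref{lemma:asym-basic-B} directly supplies the conditional limit $r_{x}^{(1)}\mathbb{V}_{x}^{(1)} = \mathbb{Z}_{x}^{(1)}$. For the counterfactual distribution, recall from the proof of Proposition~\ref{proposition:F-convergence} that $\hat{F}_{Y_{t}(0)|X=x,D_{t}=1}(y)$ is the image of the four basic empirical distributions under a Hadamard-differentiable map whose derivative at the population value acts linearly on $(\tilde{G}_{t,x}^{(0)}, \hat{G}_{t-1,x}^{(0)}, \hat{G}_{t,x}^{(1)}, \hat{G}_{t-1,x}^{(1)})$ and produces $r_{x}^{(0)}\mathbb{V}_{x}^{(0)} + \kappa_{x}(\mathbb{W}_{x}^{(0)}, \mathbb{W}_{x}^{(1)})$. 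Applying exactly this derivative in the bootstrap world, together with the joint conditional weak convergence in Lemma~\ref{lemma:asym-basic-B} and the conditional functional delta method, yields $\hat{Z}_{x}^{(0)\ast} \rightsquigarrow^{p} \mathbb{Z}_{x}^{(0)}$, jointly with $\hat{Z}_{x}^{(1)\ast}$. To make the bookkeeping transparent I would introduce, as in Proposition~\ref{proposition:F-convergence}, the infeasible estimator $\tilde{F}_{Y_{t}(0)|X=x,D_{t}=1}^{\ast}$ that uses the population quantile $F_{Y_{t-1}|X=x,D_{t}=1}^{-1}$ and CDF $F_{Y_{t-1}|X=x,D_{t}=0}$, and write the decomposition $\hat{F}^{\ast}-\hat{F} = (\tilde{F}^{\ast}-\tilde{F}) + (\hat{F}^{\ast}-\tilde{F}^{\ast}) - (\hat{F}-\tilde{F})$; the first piece is $n^{-1/2}\tilde{G}_{t,x}^{(0)\ast}$, while the last two linearize uniformly into the $\kappa_{x}$ map applied to the corresponding bootstrap and sample $\hat{G}_{t-1,x}^{(d)\ast}, \hat{G}_{t-1,x}^{(d)}$ processes.

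For the CQTT, the quantile operator $F \mapsto F^{-1}$ is Hadamard differentiable tangentially to the space of continuous functions under the positive density condition on $F_{Y_{t}(j)|X=x,D_{t}=1}$, with derivative sending $\mathbb{Z}_{x}^{(j)}$ to $\bar{\mathbb{Z}}_{x}^{(j)}$; a second application of the delta method for the bootstrap then delivers the stated conditional weak convergence of $\sqrt{n}(\hat{\Delta}_{x}^{QTT\ast} - \hat{\Delta}_{x}^{QTT})$ to $\bar{\mathbb{Z}}_{x}^{(1)} - \bar{\mathbb{Z}}_{x}^{(0)}$ in $\ell^{\infty}(\mathcal{T})$. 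The principal obstacle is making the Hadamard differentiability in the second paragraph rigorous in the bootstrap setting: the relevant map is a composition involving the inner quantile $F_{Y_{t-1}|X=x,D_{t}=1}^{-1}$, the inner CDF $F_{Y_{t-1}|X=x,D_{t}=0}$, and an outer empirical distribution of $(\Delta Y_{t}, Y_{t-1})$ for the control group, and one must check that its derivative at the population value is continuous with respect to uniform perturbations across all four arguments and that the linearization remainders are $o_{p}(1)$ uniformly in $y$. The uniform continuity and boundedness of the density $f_{\Delta Y_{t}, Y_{t-1}|X=x,D_{t}=0}$ imposed in Assumption~A6, together with the positive-density part of Assumption~A4, keeps the remainder analysis parallel to Proposition~\ref{proposition:F-convergence}, so the main work is to propagate the joint conditional weak convergence from Lemma~\ref{lemma:asym-basic-B} through this composition without disturbing the limit map $\kappa_{x}$.
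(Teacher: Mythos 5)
Your proposal is correct and follows essentially the same route as the paper: the paper likewise combines Lemma \ref{lemma:asym-basic-B} with a bootstrap linearization of the plug-in counterfactual estimator through the Hadamard derivative $\kappa_x$ (formalized there as Lemma \ref{lemma:l-approx-B}, which is exactly your three-term decomposition via the infeasible bootstrap estimator), and then applies the functional delta method for the exchangeable bootstrap (Theorem 3.9.11 of \cite{VW1996}) to the quantile map. The "principal obstacle" you identify — showing the linearization remainder is $o_p(1)$ uniformly in the bootstrap world — is precisely the content of the paper's Lemma \ref{lemma:l-approx-B}, proved by repeating the stochastic-equicontinuity argument of Lemma \ref{lemma:zero}.
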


\vspace{0.5cm} 


\section{Monte-Carlo Simulation}

We consider a small scale Monte Carlo simulation to assess the performance
of our estimator in finite samples and consider the effect of small
deviations from the Copula Invariance assumption. The data generating
process (DGP) for potential outcomes is given by 
\begin{align*}
Y_{it}(d) = \mu(d) + \theta_t + v_i + \epsilon_{it}
\end{align*}
In this setup, the treatment effect is constant across all quantiles and
given by $\mu(1) - \mu(0)$. $\theta_t$ is a time fixed effect that is common
across individuals; $v_i$ is time invariant unobserved heterogeneity that
can be distributed differently across treated and untreated groups; and $%
\epsilon_{it}$ are time varying unobservables. Throughout, we impose that $%
\theta_t=1$ and set $\mu(1) - \mu(0)$ to be either 1 or 0 and label this
effect TE.

\paragraph{DGP 1:}

The first DGP imposes that $v_i | D=d \sim N(d,1)$ and that $\epsilon_{it}$
is a noise term that follows a standard normal distribution. One can show
that both our model and the Change in Changes model \citep{athey-imbens-2006}
hold under this setup. We use this DGP to assess the finite sample
performance of our estimator using the Change in Changes model as a
benchmark. We perform 1000 Monte Carlo simulations and at each iteration, we
use 1000 bootstrap iterations to calculate empirical rejection frequencies
given the nominal size of 5\%. We
calculate standard errors using the empirical block bootstrap for the Change
in Changes using the same sample and 1000 bootstrap iterations.

The results are presented in Table 1. The first panel of Table 1 considers
the case where the treatment effect is 0 at all quantiles. Relative to the
Change in Changes models, our estimator is less biased in finite samples
especially at the 0.9th quantile. With only 100 observations, our inference procedure is
somewhat undersized, but it exhibits good size properties with 200 or 500
observations. Finally, the second panel considers the case where the
treatment effect is 1 at all quantiles. The power of our inference procedure increases
rapidly as the sample size increases from 100 to 200 and then to 500. It
also has more power at the median than at the 0.1th or 0.9th quantiles.

\paragraph{DGP 2:}

For the second DGP, we want to assess the effect of small deviations from
the Copula Invariance assumption while the Distributional DID assumption continues to hold. To do this, we assume that 
\begin{align*}
(v_i, \epsilon_{i2}, \epsilon_{i1}) | D=d \sim N(0,V_d)
\end{align*}
where 
\begin{align*}
V_d = 
\begin{pmatrix}
1 & \rho_{dv2} & \rho_{dv1} \\ 
\rho_{dv2} & 1 & \rho_{d12} \\ 
\rho_{dv1} & \rho_{d12} & 1%
\end{pmatrix}%
\end{align*}

Under this setup, $(Y_{i1}(0),\Delta Y_{i2}(0)|D=d)$ has a bivariate normal
distribution with correlation parameter $\rho _{d2}-\rho _{d1}+\rho _{d12}-1$%
. For bivariate normal distributions, the copula is Gaussian with the
dependence parameter given by the correlation coefficient. For DGP 2, we set 
$\rho _{d1}=0$ and $\rho _{d12}=1/2$ both for $d=0,1$; then, we set $\rho
_{d2}=d\bar{\rho}$ and vary $\bar{\rho}$. For $\bar{\rho}=0$, the Copula
Invariance assumption holds, but it is violated when $\bar{\rho}\neq 0$. For
each simulation, we consider the case with $N=200$.

The results are presented in Table 2. Small violations of the Copula
Invariance assumption ($\bar{\rho}=0.05$ or $\bar{\rho}=0.10$) lead to small
increases in the bias of our estimator. For example, for the 0.1th quantile,
the bias increases from 0.020 to 0.073 and 0.121 as $\bar{\rho}$ increases
from 0.00 to 0.05 to 0.10. A large increase in the violation of the 
Copula Invariance
assumption, $\bar{\rho}=0.50$, leads to a much larger increase in the bias
of our estimator for the 0.1th quantile (bias increases to 0.425) in our
simulations. On the other hand, the results for the 0.5th quantile are almost
completely insensitive to deviations from the Copula Invariance assumption. In the large
deviation case, $\bar{\rho}=0.5$, the bias is very small: 0.013; nor does
the root mean squared error change much even with large violations of the Copula Invariance
assumption for the 0.5th quantile.


\section{Empirical Application}

To illustrate our method, we consider the effect of increases in the minimum
wage on the distribution of earnings. In 1996, President Bill Clinton signed
a bill to increase the federal minimum wage from \$4.25 to \$5.15 per hour
by September 1997. The minimum wage did not increase again until the Fair
Minimum Wage Act of 2007. The Fair Minimum Wage Act was proposed on January
5, 2007; signed on May 25, 2007 by President George W. Bush; increased the
minimum wage to \$5.85 on July 24, 2007; and in two more gradual increases
settled on \$7.25 per hour in July 2009.

We exploit the long period from 1999-2007 with a flat, national minimum wage
to use state-level variation in the minimum wage to identify and estimate
the effect of increasing the minimum wage on the earnings distribution. In
the first quarter of 2006, for 33 states the federal minimum wage was
binding. The other states had state minimum wages that were higher than the
federal minimum wage. For our analysis, we take a subset of states that
raised their minimum wage in the first quarter of 2007 and have a close
geographic proximity to a state whose effective minimum wage is given by the
federal minimum wage for the entire period. This results in a sample of 5
states that increase their minimum wage (Arizona, Colorado, Minnesota,
Missouri, and North Carolina) -- this is the treated group -- and 14 control
states (Georgia, Idaho, Iowa, Kansas, Kentucky, Nebraska, New Mexico, North
Dakota, South Carolina, South Dakota, Tennessee, Utah, Virginia, and
Wyoming). The large literature on estimating the effect of changes in
minimum wage policies has emphasized that unconditional DID methods are not
likely to be valid \citep{dube-lester-reich-2010}. There has also been an
interest in understanding the effect of minimum wages on the distribution of
earnings \citep{dube-2013}.

The data for the application comes from the Current Population Survey (CPS) %
\citep{ipums-cps-2015}. The CPS surveys roughly 140,000 individuals per
month. Individuals are interviewed for four consecutive months, out of the
sample for eight months, and then interviewed for four more months.
Importantly for our purposes, individual earnings questions are asked in the
4th month and in the 8th month in the sample -- due to the survey design,
these are exactly one year apart. There are some difficulties with linking
the CPS over time, but longitudinal identifiers are available in the IPUMS
database \citep{drew-flood-warren-2014}. We limit the sample to individuals
who have earnings greater than \$10 per week and to those that we can
successfully link over time. This procedure results in a sample size of 8256
individuals (2 observations per individual) that are observed in the first
two quarters of 2007 and at some point in 2006.

Next, we divide the data into 8 categories based on gender, race (white or
non-white), and education attainment (college graduate or not). Summary
statistics are provided in Table 3. There are several important differences
between treated states and untreated states. First, in 2006, earnings in
states that raised their minimum wage were 6.5 log points (statistically
significant) higher, on average, than in states that did not raise their
minimum wage. This provides some evidence that cross sectional comparisons
of earnings distributions, at least without adjusting for covariates, are
likely to lead to upwardly biased estimates of the effect of the minimum
wage on the earnings distribution. Second, there are differences in the
covariates across treated and untreated states. While the fraction of male
individuals is similar across treated and untreated states, individuals in
states that raised their minimum wage are more likely to be white and more
likely to have a college degree. If the path of earnings, in the absence of
changes in minimum wage policy, depends on race and education, then it will
be important to control for these covariates in the analysis; similarly, if
the dependence between the change in outcomes over time and the initial
level of outcomes (both in the absence of changes in minimum wage policy)
depends on these variables, then it will be important to condition on these
variables. Our method makes it possible to account for both of these
complications.

To evaluate the effect of minimum wage increases across each subgroup, first we test whether the CQTT is 0 across all quantiles using the Kolmogorov-Smirnov test statistic.  In order to do this, we estimate the CQTT for each group over a fine grid of $\tau$ from 0.05 to 0.95 by 0.01 and using 1000 boostrap iterations to calculate the critical values of the test.  These results are available in Table 4.  For only 3 out
of 8 of the race-gender-education subgroups are we able to reject the null
that the distribution of earnings is the same due to the change in the
minimum wage policy. We reject the null of no effect at any quantile for (i)
white, female, college graduates; (ii) non-white, male, non-college, and
(iii) non-white, female, non-college. Interestingly, the groups for which we
can reject the null tend to have lower earnings than other groups -- this
seems intuitive as the minimum wage is binding only at the lower part of the
earnings distribution.  Table 4 also provides estimates of the CQTT for each subgroup at the 0.1, 0.5, and 0.9 quantiles with pointwise standard errors computed with 1000 bootstrap iterations reported.

Second, Figure 1 plots the CQTT for each subgroup as well as
95\% confidence bands. The confidence bands are obtained by inverting the Kolmogrov-Smirnov test mentioned above.  For each of the groups that we reject the null of
no effect, we only find statistically significant results at the lower part
of the distribution which also corresponds to our intuition about the
effects of increasing the minimum wage. Interestingly, the effect of the
minimum wage on earnings appears to be negative.  This result may appear surprising, but it should be remembered that earnings mixes both wages and hours.  Thus, even if the minimum wage lifts wages in the lower part of the distribution, our result could be be explained by a decrease in hours due to minimum wage increases.%
\footnote{%
As a robustness check, we also compare the change in the 10th percentile of
earnings between 2007 and 2006 for the treated group to the same change for
the untreated group. For the treated group, the 10th percentile of log
earnings increased by 3.0 log points, but the 10th percentile for the
untreated group increased by 10.5 log points. This result is in line with
our results that say that the distribution of earnings for subgroups that
experienced an effect of the minimum wage tended to be worse than the
distribution of earnings would have been absent changes in the minimum wage
policy.}


\section{Conclusion}

This paper has considered identifying and estimating the Conditional
Quantile Treatment Effect on the Treated under a Distributional DID
assumption when only two periods of data are available. We have developed
uniform confidence intervals for the CQTT and shown the validity of a
bootstrap procedure for computing confidence bands. Finally, we estimated
conditional quantile treatment effects for states that increased their
minimum wage.

Methodologically, the key innovation is to recover the unknown dependence
between the change and initial level of untreated potential outcomes for the
treated group from the observed dependence from the untreated group.
Combining this condition with a distributional extension of the most common
mean DID assumption results in point identification of the counterfactual
conditional distribution of untreated potential potential outcomes for the
treated group; and, therefore, to identification of the CQTT. There are many
examples in finance, auction models, and duration models where
identification depends on an unknown copula. The idea of replacing an
unknown copula with one observed for another group may prove to be a
fruitful line of research in those cases.

\clearpage  
\setstretch{0.3} 
\bibliographystyle{ecta.bst}
\bibliography{REF.bib}

\newpage

\section*{Appendix}


\renewcommand{\thelemma}{A\arabic{lemma}} \renewcommand{\theproposition}{A%
\arabic{proposition}} \renewcommand{\theequation}{A\arabic{equation}} %
\setcounter{equation}{0} \setcounter{lemma}{0}

In Appendix, we use $\| \cdot \|$ to denote the Euclidean norm for vectors.



\vspace{0.5cm} 
\begin{proof}
  [\textbf{Proof of Theorem \ref{theorem:identification-1}}]
  Let $x \in \mathcal{X}$ be fixed.
  For every $y \in \supp(Y_{it}(0)|X_{i}=x, D_{it}=1)$,
  we can write that 
  $F_{Y_{t}(0) |X=x, D_{t}=1}(y) 
   = 
   \Pr \{ 
     \Delta Y_{it}(0) + Y_{i, t-1}(0) \le y| X_{i}=x, D_{it}=1
   \}
  $.
  Define 
  \begin{eqnarray}
    \label{eq:d-transform}
    U_{i}^{d}:= F_{\Delta Y_{t}(0)| X=x, D_{t} = d}
    \big (\Delta Y_{it}(0) \big)
    \  \ \ \mathrm{and} \ \ \
    V_{i}^{d}:= F_{Y_{t-1}(0)| X=x, D_{t} = d}
    \big(Y_{i,t-1}(0) \big),
  \end{eqnarray}
  for $d \in \{ 0, 1 \}$.
  Under Assumption A4, we have
  \begin{eqnarray}
    \label{eq:q-transform}
    \Delta Y_{it}(0) = F_{\Delta Y_{t}(0)|X=x, D_{t} = d}^{-1}(U_{i}^{d})
    \ \ \mathrm{and} \ \  
    Y_{i, t-1}(0) = F_{Y_{t-1}(0)|X=x, D_{t} = d}^{-1}(V_{i}^{d}),
  \end{eqnarray}
  almost surely
  \citep[see][]{Rosenblatt1952AMS}.
  It follows that  
  \begin{eqnarray*}
    F_{Y_{t}(0)|X=x, D_{t}=1}(y)
    =
    \Pr
    \big \{
    F_{\Delta Y_{t}(0)|X=x, D_{t}=1}^{-1}(U_{i}^{1})
    +
    F_{Y_{t-1}(0)|X=x, D_{t}=1}^{-1}(V_{i}^{1}) 
    \le y
    \big |X_{i}=x, D_{it} = 1
    \big \}.
  \end{eqnarray*}
  For each
  $d \in \{0,1 \}$,
  the joint distribution of 
  $(U_{i}^{d},V_{i}^{d})$
  conditional on $(X_{i}, D_{it}) =(x, d)$
  is given by 
  a conditional copula 
  $C_{\Delta Y_{t}(0), Y_{t-1}(0) |X=x, D_{t}=d}$,
  which 
  is invariant with respect to 
  the conditional variable $D_{it}$
  under Assumption A3. 
  Thus we have
  \begin{eqnarray*}
    F_{Y_{t}(0)|X=x, D_{t}=1}(y)
    = 
    \Pr
    \big \{
    F_{\Delta Y_{t}(0)|X=x, D_{t}=1}^{-1}(U_{i}^{0})
    +
    F_{Y_{t-1}(0)|X=x, D_{t}=1}^{-1}(V_{i}^{0}) 
    \le y
    \big | X_{i}=x, D_{it} = 0
    \big \}.
  \end{eqnarray*}
  Under Assumption A2,
  $
  F_{\Delta Y_{t}(0)|X=x, D_{t}=1}^{-1}(\cdot)
  =
  F_{\Delta Y_{t}(0)|X=x, D_{t}=0}^{-1}(\cdot)
  $,
  which with 
  (\ref{eq:q-transform})
  yields 
  that 
  $  F_{\Delta Y_{t}(0)|X=x, D_{t}=1}^{-1}(U_{i}^{0})
    = 
    \Delta Y_{it}(0)
  $,
  almost surely. 
  Also,
  using the relation in (\ref{eq:d-transform})
  we can show that
  $ 
    F_{Y_{t-1}(0)|X=x, D_{t}=1}^{-1}(V_{i}^{0}) 
    =
    F_{Y_{t-1}(0)|X=x, D_{t}=1}^{-1}
    \circ 
    F_{Y_{t-1}(0)|X=x, D_{t} = 0}(Y_{i,t-1}(0)),
  $  
  almost surely. 
  Hence, the desired result follows.
\end{proof} 


\vspace{0.5cm} 
\begin{proof}
  [\textbf{Proof of Corollary \ref{corollary:identification-EX}}]
  Let $x \in \mathcal{X}$ be fixed.  
  Given that 
  the data generating process satisfies Assumption A1-A4,
  the result in Theorem \ref{theorem:identification-1} holds
  and we have 
  \begin{eqnarray*}
     F_{Y_{t}(0)| X=x, D_{t}=1}
    (y)
    =
    \Pr
    \big \{
      \Delta Y_{it}(0) 
      + 
      F_{ Y_{t-1}|X=x,D_{t}=1}^{-1}
      \circ 
      F_{Y_{t-1}|X=x,D_{t}=0}
      ( Y_{i,t-1} ) 
    \le y 
      | X_i=x,
      D_{it} = 0
    \big \}, 
  \end{eqnarray*}
  for $y \in \supp(Y_{it}(0)| X_i=x, D_{it}=1)$.
  Because of the repeated cross section,  
  we cannot identify the term 
  $\Delta Y_{it}(0):= Y_{it}(0) - Y_{i,t-1}(0)$
  from the observed outcomes of the untreated group.
  Under the rank invariance assumption, however, we have 
  \begin{eqnarray*}
    F_{Y_{t}(0)|X=x,D_{t} =0}(Y_{it}(0))
    =
    F_{Y_{t-1}(0)|X=x,D_{t} = 0}(Y_{i,t-1}(0)), 
  \end{eqnarray*}
  where 
  the distributions 
  $F_{Y_{t}(0)|X=x,D_{t} =0}$
  and 
  $F_{Y_{t-1}(0)|X=x,D_{t} = 0}$
  of potential outcomes,
  can be identified 
  by the distributions 
  $F_{Y_{t}|X=x,D_{t} =0}$
  and 
  $F_{Y_{t-1}|X=x,D_{t} = 0}$
  of observed outcomes,
  respectively. Thus,
  we can identify $\Delta Y_{it}(0)$ for individuals with $X_{i}=x$ and $D_{it}=0$
  by 
  \begin{eqnarray*}
    \widetilde{\Delta Y_{it}}(0) 
    :=
    F_{Y_{t}|X=x,D_{t} =0}^{-1} \circ 
    F_{Y_{t-1}|X=x,D_{t} = 0}(Y_{i,t-1})
    - 
    Y_{i,t-1}. 
  \end{eqnarray*}
  This leads to the desired result.
\end{proof}
\vspace{0.5cm} 


To derive the limiting distribution of the estimator for the CQTT, we
present two technical lemmas concerning the Hadamard differentiability. We
introduce a setup and notations used in the these lemmas. Let $%
F_{0}:=(G_{0}, H_{0})$ with $G_{0}$ and $H_{0}$ being distribution functions
having a compact support $\mathcal{V} \subset \mathbb{R}$ and a density
function $g_{0}$ and $h_{0}$, respectively. Consider a pair of continuous
random variables $(V_{1}, V_{2})$ taking values on $\mathcal{V} {\times} 
\mathcal{V} $ with the joint distribution $F_{V_{1} V_{2}}$ having a density 
$f_{V_{1} V_{2}}$ as well as the marginal distributions $F_{V_{j}}$ having a
density $f_{V_{j}}$ for $j=0,1$. We suppose that the conditional
distribution $F_{V_{1}|V_{2}}$ has a continuous density function $%
f_{V_{1}|V_{2}}$ uniformly bounded away from 0 and $\infty$.

\vspace{0.5cm}

\begin{lemma}
\label{lemma:H-diff-1}  Let  $\mathbb{D}  :=  ( C(\mathcal{V}) )^2  $  and 
define the map  $\psi:  \mathbb{D}_{\psi}  \subset  \mathbb{D}  \mapsto 
\ell^{\infty}(\mathcal{V})  $,  given by  
\begin{eqnarray*}
\psi(F) := G^{-1} \circ H,
\end{eqnarray*}
for  $F:=(G, H) \in \mathbb{D}_{\psi}$,  where  $\mathbb{D}_{\psi}:=  
\mathbb{E}{\times}\mathbb{E}  $  with  $\mathbb{E}$  denoting  the set of 
all distributions functions having a strictly positive, bounded density. 
Then,  the map $\psi$  is Hadamard differentiable at  $F_{0}$  tangentially
to  $\mathbb{D}$.  Its derivative at $F_{0}$ in $\gamma:=(\gamma_{1},
\gamma_{2}) \in \mathbb{D}$  is given by  
\begin{eqnarray*}
\psi_{F_{0}}^{\prime }(\gamma) = \frac{ \gamma_{2} - \gamma_{1} \circ
G_{0}^{-1} \circ H_{0} }{ g_{0} \circ G_{0}^{-1} \circ H_{0} }.
\end{eqnarray*}
\end{lemma}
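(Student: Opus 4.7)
The plan is to prove Hadamard differentiability directly by decomposing the increment. Fix sequences $t_n \downarrow 0$ and $\gamma^{(n)} = (\gamma_1^{(n)}, \gamma_2^{(n)}) \to \gamma = (\gamma_1, \gamma_2)$ in $\mathbb{D}$ with $F_n := F_0 + t_n \gamma^{(n)} \in \mathbb{D}_\psi$. Writing $G_n = G_0 + t_n \gamma_1^{(n)}$ and $H_n = H_0 + t_n \gamma_2^{(n)}$, I would split the numerator as
$\psi(F_n) - \psi(F_0) = \bigl( G_n^{-1} \circ H_n - G_0^{-1} \circ H_n \bigr) + \bigl( G_0^{-1} \circ H_n - G_0^{-1} \circ H_0 \bigr)$
and analyze each piece, aiming to show uniform convergence of each quotient on $\mathcal{V}$ to, respectively, $-\gamma_1 \circ G_0^{-1} \circ H_0 / (g_0 \circ G_0^{-1} \circ H_0)$ and $\gamma_2 / (g_0 \circ G_0^{-1} \circ H_0)$.

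For the second (easier) bracket, since $g_0$ is continuous and uniformly bounded away from zero on $\mathcal{V}$, $G_0^{-1}$ is continuously differentiable on the interior of $[0,1]$ with derivative $1/(g_0 \circ G_0^{-1})$. The mean value theorem yields an intermediate point $\xi_n(v)$ between $H_0(v)$ and $H_n(v)$ with
$t_n^{-1} \bigl( G_0^{-1}(H_n(v)) - G_0^{-1}(H_0(v)) \bigr) = \gamma_2^{(n)}(v) / g_0 \bigl( G_0^{-1}(\xi_n(v)) \bigr)$.
Because $H_n \to H_0$ uniformly, so does $\xi_n$; continuity of $g_0 \circ G_0^{-1}$ on the range of $H_0$ together with $\gamma_2^{(n)} \to \gamma_2$ in sup-norm then delivers uniform convergence of this quotient to the claimed limit.

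For the first bracket, I would invoke the Hadamard differentiability of the quantile map $G \mapsto G^{-1}$ at $G_0$ tangentially to continuous functions (Lemma 3.9.23 of van der Vaart and Wellner, 1996), whose derivative in direction $\gamma_1$ is $-\gamma_1 \circ G_0^{-1}/(g_0 \circ G_0^{-1})$. This gives
$\sup_{u \in K} \Bigl| t_n^{-1}\bigl( G_n^{-1}(u) - G_0^{-1}(u) \bigr) + \gamma_1^{(n)}(G_0^{-1}(u)) / g_0(G_0^{-1}(u)) \Bigr| \to 0$
over any compact $K \subset (0,1)$ containing the range of $H_0$, which is available by the compact-support and density assumptions on $H_0$. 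Evaluating at $u = H_n(v)$ and using uniform continuity of $\gamma_1 \circ G_0^{-1}$ and of $1/(g_0 \circ G_0^{-1})$ on $K$, together with $H_n \to H_0$ and $\gamma_1^{(n)} \to \gamma_1$ uniformly, yields uniform convergence of the first quotient to $-\gamma_1 \circ G_0^{-1} \circ H_0/(g_0 \circ G_0^{-1} \circ H_0)$.

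Adding the two limits reproduces $\psi'_{F_0}(\gamma)$ as stated. The principal technical obstacle is the uniform transfer of the quantile-map expansion under the outer composition with the \emph{varying} $H_n$: one must control $\gamma_1^{(n)} \circ G_0^{-1} \circ H_n - \gamma_1 \circ G_0^{-1} \circ H_0$ uniformly even though $\gamma_1^{(n)} \neq \gamma_1$. I would handle this by splitting off the term $(\gamma_1^{(n)} - \gamma_1) \circ G_0^{-1} \circ H_n$, which vanishes uniformly because $\|\gamma_1^{(n)} - \gamma_1\|_\infty \to 0$, and then using uniform continuity of $\gamma_1$ on the compact $\mathcal{V}$ combined with uniform convergence $G_0^{-1} \circ H_n \to G_0^{-1} \circ H_0$ (which follows from continuity of $G_0^{-1}$ on $K$ and uniform convergence of $H_n$).
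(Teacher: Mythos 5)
Your argument is correct, and it overlaps with the paper's proof in one key ingredient but diverges in how the composition is handled. The paper writes $\psi=\psi_{2}\circ\psi_{1}$ with $\psi_{1}(\Gamma)=(\Gamma_{1}^{-1},\Gamma_{2})$ and $\psi_{2}(\Lambda)=\Lambda_{1}\circ\Lambda_{2}$, then cites three results from van der Vaart and Wellner wholesale: Lemma 3.9.23(ii) for Hadamard differentiability of the inverse map, Lemma 3.9.27 for the composition map, and Lemma 3.9.3 (the chain rule) to assemble the derivative. You use the same Lemma 3.9.23 for the quantile map but replace the appeal to Lemma 3.9.27 and the chain rule with a direct two-term decomposition of the increment, a mean-value-theorem computation for the $G_{0}^{-1}\circ H_{n}-G_{0}^{-1}\circ H_{0}$ piece, and an explicit uniform-transfer argument for evaluating the quantile expansion at the moving argument $H_{n}(v)$. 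Your route is more self-contained and makes visible exactly where each hypothesis (density bounded away from zero, compactness of $\mathcal{V}$, uniform convergence of the perturbations) is used, at the cost of more bookkeeping; the paper's route is shorter but leans entirely on the cited lemmas. One small point to tighten: you restrict the quantile-map expansion to a compact $K\subset(0,1)$ "containing the range of $H_{0}$," but since $H_{0}$ is a distribution function with compact support its range reaches $0$ and $1$; under the standing assumption that the densities are bounded away from zero on a compact support, Lemma 3.9.23(ii) in fact delivers the expansion uniformly over the whole unit interval, so you should invoke that full-strength version rather than an interior compact.
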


\begin{proof}[\textbf{Proof}]
  To prove the assertion, we first 
  represent 
  $\psi$ as a composition map.
  Let 
  $\mathbb{D}_{\psi_{2}}
    :=
    \mathbb{E}^{-}
    {\times} 
    C(\mathcal{V})
  $,
  where 
  $\mathbb{E}^{-}$
  denotes 
  the set of 
  generalized inverse
  of 
  all functions in $\mathbb{E}$.
  Define the maps 
  $\psi_{1}: 
  \mathbb{D}_{\psi}
  \mapsto 
  \mathbb{D}_{\psi_{2}}$
  and 
  $\psi_{2}: 
    \mathbb{D}_{\psi_{2}}
    \mapsto 
    \ell^{\infty}(\mathcal{V})
  $,
  given by 
  \begin{eqnarray*}
    \psi_{1}(\Gamma):=(\Gamma_{1}^{-1}, \Gamma_{2})  
    \ \ \ \ \mathrm{and} \ \ \ \ 
    \psi_{2}(\Lambda):= \Lambda_{1} \circ \Lambda_{2},
  \end{eqnarray*}
  for 
  $  \Gamma:=(\Gamma_{1}, \Gamma_{2}) \in \mathbb{D}_{\psi}$
  and 
  $\Lambda:=(\Lambda_{1}, \Lambda_{2}) \in \mathbb{D}_{\psi_{2}}$.
  Then 
  we can write 
  $\psi = \psi_{2} \circ \psi_{1} $.

  For the map $\psi_{1}$,
  Lemma 3.9.23(ii) of \cite{VW1996} implies that if $\Gamma$ 
  has a derivative denoted by $\Gamma'$, 
  then the map $\psi_{1}$ is Hadamard differentiable at $\Gamma$
  tangentially to 
  $\mathbb{D}$.
  Its derivative at $\Gamma$
  in 
  $\gamma:=(\gamma_{1}, \gamma_{2}) 
   \in \mathbb{D}
  $ 
  is given by
  \begin{eqnarray*}
    \psi_{1, \Gamma}'(\gamma)
    :=
    \big (
      -(\gamma_{1}/ \Gamma_{1}') \circ \Gamma_{1}^{-1}, \gamma_{2}
    \big ).
  \end{eqnarray*}
  In terms of the map $\psi_{2}$,
  Lemma 3.9.27 of \cite{VW1996} implies that 
  $\psi_{2}$
  is Hadamard differentiable at $\Lambda$
  tangentially to  
  $
   C([0,1])
   {\times}
   \ell^{\infty}(\mathcal{V})
  $. 
  Its derivative at $\Lambda$
  in 
  $\lambda:=(\lambda_{1}, \lambda_{2}) 
   \in 
   C([0,1])
   {\times}
   \ell^{\infty}(\mathcal{V})
  $
  is given by 
  \begin{eqnarray*}
    \psi_{2, \Lambda}'(\lambda)
    := 
    \lambda_{1} \circ \Lambda_{2}
    +
    \Lambda_{1, \Lambda_{2}}' \lambda_{2}.
  \end{eqnarray*}

  Lemma 3.9.3 of \cite{VW1996} 
  with Hadamard derivatives of the maps 
  $\psi_{1}$ and $\psi_{2}$
  yields
  that 
  $\psi_{F_{0}}'(\gamma) = 
    \psi_{2, (G_{0}^{-1}, H_{0})}' 
    \circ 
    \psi_{1, F_{0}}'(\gamma)
  $
  for 
  $\gamma \in \mathbb{D}$,
  where 
  \begin{eqnarray*}
    \psi_{1, F_{0}}'(\gamma)
    = 
    \big (
    - 
    ( \gamma_{1}/ g_{0} )
    \circ G_{0}^{-1}, 
    \gamma_{2}
    \big ),  
  \end{eqnarray*}
  and 
  \begin{eqnarray*}
    \psi_{2, (G_{0}^{-1}, H_{0})}'(\lambda)
    = 
    \lambda_{1} \circ H_{0}
    +
    \frac{
      \lambda_{2} 
    }{
      g_{0} \circ G_{0}^{-1} \circ H_{0}
    }, 
  \end{eqnarray*}
  because 
  $\partial G_{0}^{-1}(\tau) / \partial \tau
   =
   1 / 
   \big ( g_{0} \circ G_{0}^{-1}(\tau) \big ).
  $
  Hence the desired result follows.
\end{proof}
\vspace{0.5cm} 


\begin{lemma}
\label{lemma:H-diff-2}  Let  $\mathbb{D}  :=  ( C(\mathcal{V}) )^2  $  and 
let  $\mathcal{W}$  be a compact subset of $\mathbb{R}$.  Define the map  $%
\phi:  \mathbb{D}_{\phi}  \subset  \mathbb{D}  \mapsto  \ell^{\infty}(%
\mathcal{W})  $,  given by  
\begin{eqnarray*}
\phi(F)(w):= \Pr\{V_{1} + G^{-1} \circ H(V_{2}) \le w\},
\end{eqnarray*}
for  $F:=(G, H) \in \mathbb{D}_{\phi}$  and  for  $w \in \mathcal{W}$, 
where  $\mathbb{D}_{\phi}:=  \mathbb{E}{\times}\mathbb{E}  $  with  $\mathbb{%
E}$  being  the set of  all distributions functions having a strictly
positive, bounded density.  Then,  the map $\phi$  is Hadamard
differentiable at  $F_{0}$  tangentially to $\mathbb{D}$.  Its derivative at 
$F_{0}$  in  $\gamma:=(\gamma_{1}, \gamma_{2}) \in \mathbb{D}$  is given by  
\begin{eqnarray*}
\phi_{F_{0}}^{\prime }(\gamma)(w) := \int \big ( \gamma_{2} (v_{2}) -
\gamma_{1} \circ G_{0}^{-1} \circ H_{0} (v_{2}) \big ) \frac{ f_{V_{1}
V_{2}} ( w - G_{0}^{-1} \circ H_{0} (v_{2}), v_{2} ) }{ g_{0} \circ
G_{0}^{-1} \circ H_{0} (v_{2}) } d v_{2}.
\end{eqnarray*}
\end{lemma}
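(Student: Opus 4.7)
The plan is to write $\phi$ as a composition $\phi = \Phi \circ \psi$, where $\psi : \mathbb{D}_\psi \to \ell^\infty(\mathcal{V})$ is the map $F = (G,H) \mapsto G^{-1}\circ H$ already treated in Lemma \ref{lemma:H-diff-1}, and $\Phi : \ell^\infty(\mathcal{V}) \to \ell^\infty(\mathcal{W})$ is the evaluation map
\[ \Phi(T)(w) \;:=\; \Pr\{V_1 + T(V_2) \le w\} \;=\; \int F_{V_1 | V_2}\bigl(w - T(v) \,\big|\, v\bigr)\, f_{V_2}(v)\, dv, \]
obtained by conditioning on $V_2$. Lemma \ref{lemma:H-diff-1} already delivers Hadamard differentiability of $\psi$ at $F_0$ with derivative $\psi'_{F_0}(\gamma) \in C(\mathcal{V})$ (the continuity of the quotient uses $g_0$ bounded away from zero). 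The main task is therefore to show that $\Phi$ is Hadamard differentiable at $T_0 := G_0^{-1}\circ H_0$ tangentially to $C(\mathcal{V})$, and then to invoke the chain rule (Lemma 3.9.3 of \cite{VW1996}).

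For the differentiability of $\Phi$, take sequences $t_n \to 0$ and $\eta_n \to \eta$ in sup-norm. From the integral representation,
\[ \frac{\Phi(T_0 + t_n\eta_n)(w) - \Phi(T_0)(w)}{t_n} = \int \frac{F_{V_1|V_2}(w - T_0(v) - t_n\eta_n(v) | v) - F_{V_1|V_2}(w - T_0(v) | v)}{t_n}\, f_{V_2}(v)\, dv. \]
Apply a mean value expansion of $F_{V_1|V_2}(\cdot|v)$ around $w - T_0(v)$; the intermediate point lies within $t_n\|\eta_n\|_\infty$ of $w - T_0(v)$. Combining the uniform continuity of $f_{V_1|V_2}$ (given in the setup), the uniform boundedness of $f_{V_2}$, and the compactness of $\mathcal{V}$ and $\mathcal{W}$, a dominated-convergence argument yields, uniformly in $w \in \mathcal{W}$,
\[ \frac{\Phi(T_0 + t_n\eta_n)(w) - \Phi(T_0)(w)}{t_n} \;\longrightarrow\; \Phi'_{T_0}(\eta)(w) \;=\; -\!\int \eta(v)\, f_{V_1 V_2}\bigl(w - T_0(v),\, v\bigr)\, dv, \]
which establishes the required Hadamard differentiability of $\Phi$.

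The chain rule then gives $\phi'_{F_0}(\gamma) = \Phi'_{T_0}\bigl(\psi'_{F_0}(\gamma)\bigr)$, and substituting the expression for $\psi'_{F_0}(\gamma)$ from Lemma \ref{lemma:H-diff-1} produces the stated formula for $\phi'_{F_0}(\gamma)(w)$ after rearrangement.

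The step I expect to be the main obstacle is upgrading the pointwise mean-value approximation for $\Phi$ to \emph{uniform} convergence in $w \in \mathcal{W}$ while simultaneously allowing $\eta_n$ to vary in sup-norm around $\eta$. This is precisely where the uniform continuity of $f_{V_1|V_2}$ and the compactness of $\mathcal{V}$ and $\mathcal{W}$ are needed, and where one must verify that the Taylor remainder is $o(t_n)$ uniformly in $(v,w)$; everything else is essentially bookkeeping through the chain rule.
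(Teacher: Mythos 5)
Your proof takes essentially the same route as the paper's: the same factorization $\phi=\Phi\circ\psi$ with $\Phi$ the integral evaluation map $T\mapsto\int F_{V_1|V_2}(w-T(v)\,|\,v)\,dF_{V_2}(v)$ (the paper's $\pi$), the same first-order expansion of $F_{V_1|V_2}$ controlled by the uniform continuity of $f_{V_1|V_2}$ and the compactness of $\mathcal{V}$ and $\mathcal{W}$, and the same appeal to the chain rule combined with Lemma \ref{lemma:H-diff-1}. One remark: your (correctly derived) $\Phi'_{T_0}(\eta)(w)=-\int\eta(v)\,f_{V_1V_2}(w-T_0(v),v)\,dv$ carries a minus sign, so the composition literally yields the negative of the displayed formula rather than the formula itself ``after rearrangement''; the paper's own proof contains the same sign slip (its stated $\pi'_{\Xi}$ omits the minus sign implied by its own first-order identity), so this discrepancy is inherited from the statement and is not a defect of your argument.
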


\begin{proof}[\textbf{Proof}]
  To prove the assertion, we represent 
  $\phi$ as a composition map.
  Define $\psi: \mathbb{D}_{\phi} \to \mathbb{D}_{\pi}$
  as in the proceeding lemma,
  where 
  $\mathbb{D}_{\pi}$
  denotes the set of all functions 
  $F^{-1} \circ G$
  for 
  $(F^{-1}, G) \in 
   \mathbb{E}^{-}
   {\times}
   \mathbb{E}
  $
  with 
  $\mathbb{E}^{-}$
  and
  $\mathbb{E}$
  defined in the proof of the proceeding lemma.
  Define 
  the map
  $\pi:
    \mathbb{D}_{\pi}
    \mapsto 
    \ell^{\infty}(\mathcal{W})
  $,
  given by 
  \begin{eqnarray*}
    \pi(\Xi)(w):= 
    \int F_{V_{1}|V_{2}} 
    \big (
    w - \Xi(v_{2}) | v_{2}
    \big)
    d F_{V_{2}}(v_{2}),
  \end{eqnarray*}
  for $w \in \mathcal{W}$.
  Since we can write 
  $
    \phi(F)(w)
    = 
    \int 
    F_{V_{1}|V_{2}} 
    \big (
       w - G^{-1} \circ H(v_{2})  | v_{2}
    \big)
    d F_{V_{2}}(v_{2})
  $
  for 
  $F \in \mathbb{D}$
  and 
  $w \in \mathcal{W}$,
  we can show that 
  $\phi = \pi \circ \psi$.

  We wish to show that $\pi$ has a Hadamard derivative 
  at $\Xi \in \mathbb{D}_{\pi}$
  tangentially to $\mathbb{D}$
  with derivative at $\Xi$ in $\xi \in \mathbb{D}$
  \begin{eqnarray}
    \label{eq:hd-pi}
    \pi_{\Xi}'(\xi)(w)
    = 
    \int 
    \xi (v_{2})
    f_{V_{1}|V_{2}}(w - \Xi(v_{2})|v_{2} ) 
    d F_{V_{2}}(v_{2}). 
  \end{eqnarray}
  Consider any sequence $t_{k} > 0$ and $\Xi_{k} \in \mathbb{D}_{\pi}$
  for $k \in \mathbb{N}$ such that 
  $t_{k} \searrow 0$
  and 
  $\xi_{k}:= (\Xi_{k} - \Xi)/t_{k} \to \xi$
  in 
  $\mathbb{D}$
  as $k \to \infty$.
  We have 
  \begin{eqnarray*}
    F_{V_{1}|V_{2}}(w - \Xi_{k}(v_{2}) |v_{2})
    -
    F_{V_{1}|V_{2}}(w - \Xi(v_{2})|v_{2})
    = 
    t_{k} \xi_{k} (v_{2})
    \int_{0}^{1}
    f_{V_{1}|V_{2}}(w - \Xi(v_{2}) - r t_{k} \xi_{k}(v_{2})|v_{2})
    dr.
  \end{eqnarray*}
  It follows that 
  \begin{eqnarray*}
    \frac{
      \pi(\Xi_{k})
      -
      \pi(\Xi)
    }{ t_{k}}
    -
    \pi_{\Xi}'(\xi)
    &=& 
    \int 
    \big (
      \xi_{k} (v_{2})
      -
      \xi (v_{2})
    \big )
    f_{V_{1}|V_{2}}(\cdot - \Xi(v_{2})|v_{2} )
    d F_{V_{2}}(v_{2}) \\
    &&+
    \int 
    \xi_{k} (v_{2})
    D_{k}(\cdot, v_{2})
    d F_{v_{2}}(v_{2}),
  \end{eqnarray*}
  where 
  $ D_{k}(w, v_{2}):=
    \int_{0}^{1}
    \big \{
    f_{V_{1}|V_{2}}(w - \Xi(v_{2}) - r t_{k} \xi_{k}(v_{2})|v_{2})
    -
    f_{V_{1}|V_{2}}(w - \Xi(v_{2}) |v_{2})
    \big \}
    dr
  $. 
  Since 
  $f_{V_{1}|V_{2}}$ is uniformly continuous and 
  $\xi_{k}$ is uniformly bounded, 
  $\lim_{k \to \infty}\| D_{k}\|_{\mathcal{W}{\times}\mathcal{V}} = 0$
  and thus 
  the second term on the above display converges to 0 
  as $k \to \infty$.
  Also 
  the first term on the above display converges to zero 
  because $f_{V_{2}|V_{1}}$ is uniformly bounded
  and 
  $ 
    \| 
      \xi_{k} 
      -
      \xi
    \|_{\infty} \to 0
  $  as $k \to \infty$.
  Thus the map $\pi$ has the Hadamard derivative as stated.

  Lemma 3.9.3 of \cite{VW1996} shows that 
  $\phi_{F_{0}}'(\gamma) = 
    \pi_{G_{0}^{-1} \circ H_{0} }' 
    \circ 
    \psi_{F_{0}}'(h)
  $,
  which 
  together with
  the Hadamard derivative
  of 
  $\pi$
  in
  (\ref{eq:hd-pi})
  and 
  the one of 
  $\psi$
  in 
  Lemma \ref{lemma:H-diff-1}
  yields
  \begin{eqnarray*}
    \phi_{F_{0}}'(\gamma)(w)    
    =
    \int 
    \frac{
      \gamma_{2}    (v_{2})
      - 
      \gamma_{1} \circ G_{0}^{-1} \circ H_{0}  (v_{2})
    }{
      g_{0} \circ G_{0}^{-1} \circ H_{0}     (v_{2})
    }
    f_{V_{1}|V_{2}}
    \big (
      w - G_{0}^{-1}\circ H_{0}(v_{2})
      \big |v_{2} 
    \big )
    d F_{V_{2}}(v_{2}).
  \end{eqnarray*}
  Hence the desired result follows.
\end{proof}
\vspace{0.5cm} 



Define $\tilde{V}  :=  V_{1} + G_{0}^{-1}\circ H_{0}(V_{2})$.
We additionally assume that $\tilde{V}$ is distributed over a compact space $%
\mathcal{V}$ with a distribution $F_{\tilde{V}}$ and a continuous density $f_{%
\tilde{V}}$ uniformly bounded away from 0 and $\infty$. We consider random
sample $\{(V_{1 i}, V_{2 i})\}_{i=1}^{n}$ of $n$ independent copies of $%
(V_{1}, V_{2})$ and let $\tilde{V}_{i}  :=  V_{1i} + G_{0}^{-1}\circ H_{0}(V_{2i}) $. 
We set 
$F_{n}:=(G_{n}, H_{n})$ to denote a random element of $(\ell^{\infty}(%
\mathcal{V}))^2$ as a consistent estimator for $F_{0}$. 
For $F=(G, H)  \in  (C(\mathcal{V}))^2 
$ and $w \in \mathcal{W}$, define a functional taking values at $
\ell^{\infty}(\mathcal{W}) $: 
\begin{eqnarray}
  \label{eq:def-phi}
  \phi_{n} (F)(w):= n^{-1} \sum_{i=1}^{n} 1 \{ V_{1i} + G^{-1} \circ
H(V_{2i}) \le w \},
\end{eqnarray}
and the empirical process indexed by $F  \in  \big (\ell^{\infty}(\mathcal{V}) \big)^{2} $%
: 
\begin{eqnarray*}
  \nu_{n}(F) := \sqrt{n} \big ( \phi_{n}(F) - \phi(F) \big ).
\end{eqnarray*}
The lemma below is proven, along the line of Theorem 2.3 of \cite{VW2007IMS} with
some modification.

\vspace{0.5cm}

\begin{lemma}
\label{lemma:zero}  Suppose that  $\sqrt{n}(F_{n} - F_{0} )$  converges in
distribution to a tight, random element  with values in $(\ell^{\infty}(%
\mathcal{V}) )^2$.  Then,  
\begin{eqnarray*}
  \sup_{w \in \mathcal{W}} \big | \nu_{n} (F_{n}) - \nu_{n} (F_{0})\big |(w) = o_p(1).
\end{eqnarray*}
\end{lemma}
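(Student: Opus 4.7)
\textbf{Proof plan for Lemma \ref{lemma:zero}.}

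The plan is to view $\nu_{n}(F)(w)$ as an empirical process indexed by a class of functions and then appeal to asymptotic equicontinuity of a Donsker class, following Theorem 2.3 of \cite{VW2007IMS}. Concretely, for $F = (G, H) \in \mathbb{D}_{\phi}$ and $w \in \mathcal{W}$, define $f_{F, w}(v_{1}, v_{2}) := 1\{v_{1} + G^{-1}\circ H(v_{2}) \le w\}$ and let $\mathcal{F} := \{f_{F, w} : F \in \mathcal{F}_{0}, \ w \in \mathcal{W}\}$, where $\mathcal{F}_{0}$ is a shrinking neighborhood of $F_{0}$ in $(\ell^{\infty}(\mathcal{V}))^{2}$ containing $F_{n}$ with probability tending to one (which is possible because $\sqrt{n}(F_{n} - F_{0})$ is tight). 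Then $\nu_{n}(F)(w) = \mathbb{G}_{n} f_{F, w}$, and the claim is $\sup_{w} |\mathbb{G}_{n} f_{F_{n}, w} - \mathbb{G}_{n} f_{F_{0}, w}| = o_{p}(1)$.

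The first step is to show that $\mathcal{F}$ is $P$-Donsker. For any $F = (G, H) \in \mathbb{D}_{\phi}$ the composition $g := G^{-1} \circ H$ is monotone nondecreasing and uniformly bounded on the compact support $\mathcal{V}$; the class of such functions has polynomial uniform entropy with a constant envelope (see Theorem 2.7.5 of \cite{VW1996}), and half-space indicators formed from them, of the form $1\{v_{1} + g(v_{2}) \le w\}$, inherit a uniform entropy bound by standard preservation results. The (constant) envelope is then trivially square-integrable, so $\mathcal{F}$ is $P$-Donsker.

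The second step is to verify that $\|f_{F_{n}, w} - f_{F_{0}, w}\|_{L^{2}(P)} \to 0$ in probability uniformly in $w \in \mathcal{W}$. Conditioning on $V_{2}$ and using the uniform boundedness of $f_{V_{1}|V_{2}}$, one obtains
\begin{eqnarray*}
\|f_{F_{n}, w} - f_{F_{0}, w}\|_{L^{2}(P)}^{2}
\le
\sup f_{V_{1}|V_{2}} \cdot E\big|G_{n}^{-1}\circ H_{n}(V_{2}) - G_{0}^{-1}\circ H_{0}(V_{2})\big|,
\end{eqnarray*}
where the bound does not depend on $w$. The hypothesis that $\sqrt{n}(F_{n} - F_{0})$ is tight yields $\|G_{n} - G_{0}\|_{\infty} = O_{p}(n^{-1/2})$ and $\|H_{n} - H_{0}\|_{\infty} = O_{p}(n^{-1/2})$; combined with the strict positivity and continuity of the density $g_{0}$, this implies $\|G_{n}^{-1} - G_{0}^{-1}\|_{\infty} = o_{p}(1)$ on $\mathcal{V}$, and hence $\|G_{n}^{-1}\circ H_{n} - G_{0}^{-1}\circ H_{0}\|_{\infty} = o_{p}(1)$, so the right-hand side above tends to zero in probability uniformly in $w$.

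Finally, I combine the two steps. Because $\mathcal{F}$ is $P$-Donsker, the empirical process $\{\mathbb{G}_{n} f : f \in \mathcal{F}\}$ is asymptotically uniformly $\rho_{P}$-equicontinuous, where $\rho_{P}$ denotes the $L^{2}(P)$ semimetric (Lemma 1.5.7 and Theorem 1.5.7 of \cite{VW1996}). Since $f_{F_{n}, w}$ lies in $\mathcal{F}$ with probability tending to one and $\sup_{w} \rho_{P}(f_{F_{n}, w}, f_{F_{0}, w}) \to_{p} 0$, taking the supremum over $w$ in the equicontinuity statement yields $\sup_{w \in \mathcal{W}} |\mathbb{G}_{n} f_{F_{n}, w} - \mathbb{G}_{n} f_{F_{0}, w}| = o_{p}(1)$, which is the desired conclusion. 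The main obstacle is the first step: carefully verifying that the plug-in class $\mathcal{F}$, indexed by the infinite-dimensional nuisance $F$, admits a suitable uniform entropy bound; exploiting monotonicity of $G^{-1}\circ H$ is the key device that makes this work without any further smoothness assumption on $F_{n}$.
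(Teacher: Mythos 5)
Your route is the generic ``plug-in'' argument (Donsker indexing class plus $L^{2}(P)$-consistency of the estimated index, then asymptotic equicontinuity), which is genuinely different from the paper's proof: the paper never establishes a Donsker property for the plug-in class, but instead exploits the $\sqrt{n}$-rate directly, writing $\Xi_{n}=\Xi_{0}+n^{-1/2}\xi_{n}$ with $\xi_{n}$ weakly convergent to a tight limit, discretizing into finitely many directions $\xi^{(j)}$ and a finite grid on $\mathcal{V}$, and then bounding the oscillation of $\bar{\nu}_{n}$ over $n^{-1/2}$-neighborhoods by envelope indicators whose means are $O(n^{-1/2}\delta)$, finishing with Markov's inequality (the scheme of Theorem 2.3 of \cite{VW2007IMS}). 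Your Steps 2 and 3 are fine, but Step 1 --- which you yourself flag as the main obstacle --- contains a genuine gap, and it is precisely the step the paper's construction is designed to avoid.

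The problem is that the class $\mathcal{F}=\{(v_{1},v_{2})\mapsto 1\{v_{1}+g(v_{2})\le w\}\}$, with $g=G^{-1}\circ H$ ranging over (a neighborhood of $g_{0}$ in) the bounded monotone functions, does not admit the entropy bounds you invoke. Bounded monotone functions are not a VC-subgraph class (their subgraphs shatter arbitrarily large point sets), so there is no ``polynomial uniform entropy'' to preserve; Theorem 2.7.5 of \cite{VW1996} gives only $\log N_{[\,]}(\epsilon,\mathcal{M},L_{1}(Q))\lesssim 1/\epsilon$. Passing to the indicators, an $L_{1}$-bracket of size $\epsilon$ for $g$ yields (via the bounded conditional density) an $L_{1}$-bracket of size $C\epsilon$ for $1\{v_{1}+g(v_{2})\le w\}$, hence an $L_{2}$-bracket of size $\sqrt{C\epsilon}$, so the best generic bound is $\log N_{[\,]}(\delta,\mathcal{F},L_{2}(P))\lesssim 1/\delta^{2}$, whose bracketing integral diverges; the sets $\{v_{1}\le w-g(v_{2})\}$ are exactly lower layers in $\mathbb{R}^{2}$, a famously borderline class. (Restricting to a sup-norm neighborhood of $g_{0}$, or to Lipschitz $g$, changes only constants, not the $\delta^{-2}$ order; and since in the application $G_{n}^{-1}\circ H_{n}$ is a step function, you cannot retreat to a smoother class.) The Donsker property of two-dimensional lower layers does hold for laws with bounded densities, but only by Dudley's special argument, not by ``standard preservation results,'' and you neither cite nor reproduce it. By contrast, the paper's proof needs nothing of the sort: because the perturbation is $O_{p}(n^{-1/2})$ in sup norm, the two indicators differ only on an event of probability $O(n^{-1/2})$, and first-moment bounds over a finite net suffice. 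If you want to keep your architecture, you must either invoke Dudley's lower-layer theorem explicitly (and verify it for the law of $(V_{1},V_{2})$ at hand) or, better, replace Step 1 by the localized $n^{-1/2}$-scale argument that uses the rate of convergence of $F_{n}$, which is the content of the paper's proof.
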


\begin{proof}[\textbf{Proof}]
  Because  the set $\mathcal{W}$ is compact, it suffices to shows 
  that the assertions holds for each $w \in {\cal W}$.
  Also, from the definition of $\phi_{n}$ in (\ref{eq:def-phi}), 
  both $\phi_{n}(F)$ and $\nu_{n}(F)$ can be considered as 
  functions of $\Xi:=G^{-1} \circ F \in \ell^{\infty}({\cal V})$.
  Letting  
  $
  \bar{\nu}_{n}(\Xi)= \nu_{n} (F)
  $,
  we show that, 
  for each $w \in \mathcal{W}$,
  \begin{eqnarray*}
    \big | \bar{\nu}_{n} (\Xi_{n}) - \bar{\nu}_{n} (\Xi_{0})\big |(w) = o_p(1),
  \end{eqnarray*}
  where 
  $
  \Xi_{n}:=G_{n}^{-1} \circ F_{n}
  $
  and 
  $\Xi_{0}:=G_{0}^{-1} \circ F_{0}$.
  Let $w \in \mathcal{W}$ be fixed and choose an 
  arbitrary small $\epsilon>0$.
  Suppose that 
  $\sqrt{n}(F_{n} - F_{0})$
  converges in distribution to some tight random element. 
  Then, by the functional delta method and 
  Hadamard differentiability, 
  Lemma \ref{lemma:H-diff-1} implies
  \begin{eqnarray}
    \label{eq:w-limit-1}
    \xi_{n}:=
    \sqrt{n}
    \big (
    \Xi_{n}
    -
    \Xi_{0}
    \big )
    \rightsquigarrow
    \xi_{\infty},
  \end{eqnarray} 
  in $\ell^{\infty}(\mathcal{V})$
  for some tight random element $\xi_{\infty}$.
  Then there exists a compact set 
  $S \subset \ell^{\infty}(\mathcal{V})$
  such that 
  $\Pr\{\xi_{\infty} \not \in S\} \le \epsilon/2$,
  and also  
  $\limsup_{n \to \infty} \Pr\{\xi_{n} \not \in S^{\delta/4}\} \le  \epsilon/2$
  for any $\delta>0$,
  where 
  $S^{\delta/4}$ is the 
  $\delta/4$-enlargement set of $S$. 
  Because $S$ is compact, 
  for any $\delta>0$
  there exists 
  a finite set 
  $\{\xi^{(1)}, \dots, \xi^{(J)}\} \subset S$
  with $J=J(\delta)$
  such that 
  $\sup_{\xi \in S}
   \min_{1 \le j \le J}
   \|\xi - \xi^{(j)}\|_{\infty} 
   < \delta/4
  $. 
  It follows that,
  for any $\delta>0$,
  \begin{eqnarray}
    \label{eq:E1-bb0}
    \Pr  
    \Big \{
    \min_{1 \le j \le J}
    \|\xi_{n} - \xi^{(j)}\|_{\infty} 
    \ge \delta/2
    \Big \}
    \le 
    \Pr
    \big \{
    \xi_{n} \not \in S^{\delta/4}
    \big \}
    \le \epsilon / 2,
  \end{eqnarray}
  for a sufficiently large $n$.

  In the view of 
  the compactness of $\mathcal{V}$, 
  for every $\eta>0$,
  there is a finite set
  $\{v_{1}, \dots, v_{K} \} \subset \mathcal{V}$
  with $K=K(\eta)$
  such that 
  $\sup_{v \in \mathcal{V}}
   \min_{1 \le k \le K}
   |v - v_{k}|
   < \eta 
  $.  
  Define the map 
  $\Pi_{\delta}:
  \mathcal{V} 
  \mapsto
  \{v_{k}\}_{k=1}^{K}$
  such that
  $|v - \Pi_{\eta}(v)| \le \eta$
  for every $v \in \mathcal{V}$.
  Theorem 1.5.7 of \cite{VW1996} with (\ref{eq:w-limit-1})
  implies that 
  for any $\delta>0$,
  there exists $\eta>0$ such that,
  for a sufficiently large $n$,
  \begin{eqnarray}
    \label{eq:E2-bb0}
    \Pr
    \big \{
    \|\xi_{n} - \xi_{n} \circ \Pi_{\eta}\|_{\infty}
    > \delta/2
    \big \}
    <
    \epsilon/2.
  \end{eqnarray}
  It follows from 
  (\ref{eq:E1-bb0})
  and 
  (\ref{eq:E2-bb0})
  that, for a sufficiently large $n$,
  \begin{eqnarray*}
    \Pr
    \Big \{
     \min_{1 \le j \le J}
     \|\xi_{n} - \xi^{(j)} \circ \Pi_{\eta}\|_{\infty}
     > \delta
    \Big \}
    <
    \epsilon,
  \end{eqnarray*}
  which yields 
  that,
  given a set 
  $\mathcal{M}_{j,\eta}(\delta)
   :=\{\xi \in \ell^{\infty}(\mathcal{V}): \|\xi - \xi^{(j)} \circ \Pi_{\eta}\|_{\infty} \le \delta\}
  $, we have
  \begin{eqnarray*}
    \Pr 
    \Big \{
    \big | 
    \bar{\nu}_{n}
    (\Xi_{n})
    {-}
    \bar{\nu}_{n}
    (\Xi_{0})
    \big |
    (w)
    \ge \epsilon
    \Big \}
    \le 
    \sum_{j=1}^{J}
    \Pr
    \Big \{
    \sup_{\xi \in \mathcal{M}_{j,\eta}(\delta)}
    \big | 
    \bar{\nu}_{n}
    (\Xi_{0} {+} n^{-1/2} \xi)
    {-}
    \bar{\nu}_{n}
    (\Xi_{0})
    \big |
    (w)
    \ge \epsilon
    \Big \}
    +
    \epsilon.
  \end{eqnarray*}
  Since $J$ is finite, 
  it suffices to show that, for each $j = 1, \dots, J$,
  \begin{eqnarray*}
    \sup_{\xi \in \mathcal{M}_{j,\eta}(\delta)}
    \big | 
    \bar{\nu}_{n}
    (\Xi_{0} {+} n^{-1/2} \xi)
    -
    \bar{\nu}_{n}
    (\Xi_{0})
    \big |
    (w)
    = o_p(1).
  \end{eqnarray*}

  An application of the triangle inequality yields 
  that, 
  for every
  $\xi \in \mathcal{M}_{j,\eta}(\delta)$,
  \begin{eqnarray*}
    \big | 
    \bar{\nu}_{n}
    (\Xi_{0} {+} n^{-1/2} \xi)
    -
    \bar{\nu}_{n}
    (\Xi_{0})
    \big | 
    (w)
    &\le& 
    \big |
    \bar{\nu}_{n}
    (\Xi_{0} {+} n^{-1/2} \xi)
    -
    \bar{\nu}_{n}
    (\Xi_{0} {+} n^{-1/2} \xi^{(j)})
    \big | 
    (w)
    \\
    &&+
    \big |
    \bar{\nu}_{n}
    (\Xi_{0} {+} n^{-1/2} \xi^{(j)})
    -
    \bar{\nu}_{n}
    (\Xi_{0})
    \big |
    (w).
  \end{eqnarray*}
  We separately consider two terms on the right-hand side of the above inequality. 
  First, 
  we can form an envelop function
  $
    I_{i,j}^{(1)}(\delta):=
    1 
    \{ 
    \max_{1 \le k \le K}
    |
      \tilde{V}_{i} + n^{-1/2} \xi^{(j)}
      \circ \Pi_{\eta}(V_{2i})
      - w
    | \le 
    n^{-1/2} \delta
    \}
  $
  for a collection of functions 
  \begin{eqnarray*}
    \Big \{
    1 
    \big \{ 
    \tilde{V}_{i}
    + 
    n^{-1/2}
    \xi(V_{2i})
    \le w
    \big \}
    -
    1 
    \big \{ 
    \tilde{V}_{i}
    + 
    n^{-1/2}
    \xi^{(j)} \circ \Pi_{\eta}(V_{2i})
    \le w
    \big \}
    : 
    \xi \in \mathcal{M}_{j,\eta}(\delta)
    \Big \}.
  \end{eqnarray*}
  We can write 
  \begin{eqnarray}
    \label{eq:delta1}
    \sup_{\xi \in \mathcal{M}_{j,\eta}(\delta)}
    \big |
    \bar{\nu}_{n}
    (\Xi_{0} {+} n^{-1/2} \xi)
    -
    \bar{\nu}_{n}
    (\Xi_{0} {+} n^{-1/2} \xi^{(j)})
    \big |
    (w)
    \le 
    n^{-1/2}
    \sum_{i=1}^{n}
    I_{i,j}^{(1)}(\delta)
    +
    \sqrt{n}
    E[ I_{1,j}^{(1)}(\delta) ].
  \end{eqnarray}
  Let 
  $\xi_{k}^{(j)}:= \xi^{(j)} \circ \Pi_{\eta}(v_{k})$
  for
  $k=1, \dots, K$.
  The second term on the right-hand of (\ref{eq:delta1})
  become arbitrarily small for a sufficiently small $\delta>0$,
  because we have
  \begin{eqnarray}
    \label{eq:E-1}
    \sqrt{n}
    E[ I_{1,j}^{(1)}(\delta)]
    \le 
    \sqrt{n}
    \sum_{k=1}^{K}
    \int_{-n^{-1/2} \delta}^{n^{-1/2} \delta}
    f_{\tilde{V}}
    (s - \xi_{k}^{(j)} + w)
    ds
    \le 
    \delta C_{1},
  \end{eqnarray}
  for some constant $C_{1}$.
  Applying the Markov inequality for 
  the first term on the right-hand of (\ref{eq:delta1}), 
  we obtain
  \begin{eqnarray*}
    \Pr 
    \Big \{
    n^{-1/2}
    \sum_{i=1}^{n}
    I_{i,j}^{(1)}(\delta)
    \ge \epsilon
    \Big \}
    \le 
    \epsilon^{-1}
    \sqrt{n}
    E[ I_{1,j}(\delta)],
  \end{eqnarray*}
  where 
  the right-hand side 
  becomes arbitrarily small 
  for a sufficiently small $\delta$
  due to (\ref{eq:E-1}).
  Thus,
  the right-hand side of (\ref{eq:delta1})
  converges to 0 in probability
  for a sufficiently small $\delta>0$.

  Next, we have the remaining term. 
  We have, 
  for every $i = 1, \dots, n$,
  \begin{eqnarray*}
    \big |
    1 
    \big \{ 
    \tilde{V}_{i}
    + 
    n^{-1/2}
    \xi^{(j)} \circ \Pi_{\eta}(V_{2i})
    \le w
    \big \}
    -
    1 
    \big \{ 
    \tilde{V}_{i}
    \le w
    \big \}
    \big |
    \le 
    I_{i,j}^{(2)}(\delta)
  \end{eqnarray*}
  where 
  $I_{i,j}^{(2)}(\delta)
    :=
    1 
    \big \{ 
    |
      \tilde{V}_{i} 
      - w
    | \le 
    n^{-1/2} 
    \max_{1 \le k \le K}
    |\xi_{k}^{(j)}|
    \big \}
   $.
   Using the Markov inequality, we can show that 
  \begin{eqnarray*}
    \Pr 
    \Big \{
    \big | 
    \bar{\nu}_{n}
    (\Xi_{0} {+} n^{-1/2} \xi^{(j)})
    -
    \bar{\nu}_{n}
    (\Xi_{0})
    \big |
    (w)
    \ge
    \epsilon
    \Big \}
    \le
    \epsilon^{-2}
    E
    [I_{i,j}^{(2)}(\delta)],
  \end{eqnarray*}
  where
  the right-hand side goes to zero 
  as $n\to \infty$
  because 
  $
    E
    \big[
     I_{i,j}^{(2)}(\delta)
    \big ]
    \le 
    C_{2} 
    n^{-1/2}
  $
  for some constant $C_{2}$.
  Hence 
  the proof is completed. 
\end{proof}


\vspace{0.5cm}

\begin{lemma}
\label{lemma:l-approx}  
Let $\phi_{n} \in \ell({\cal W})$ be the function defined in (\ref{eq:def-phi}).
Suppose that  $\sqrt{n}(F_{n} - F_{0} )$  converges
in distribution to a tight, random element in  $\big (\ell^{\infty}(\mathcal{V}) \big)^{2}$.
Then,   
\begin{eqnarray*}
\sqrt{n} \big ( \phi_{n}( F_{n}) - \phi(F_{0}) \big ) = \nu_{n}(F_{0}) +
\phi_{F_{0}}^{\prime }\big ( \sqrt{n} (F_{n} - F_{0}) \big ) + o_p(1),
\end{eqnarray*}
uniformly in ${\cal W}$,
where  $\phi_{F_{0}}^{\prime }$  is  the Hadamard derivative  given in 
Lemma \ref{lemma:H-diff-2}.
\end{lemma}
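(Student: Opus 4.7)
The plan is to decompose the expression $\sqrt{n}(\phi_{n}(F_{n}) - \phi(F_{0}))$ into three pieces, handle two of them with results already at hand, and identify the third with $\nu_{n}(F_{0})$ by definition. Specifically, by adding and subtracting $\phi(F_{n})$, $\phi_{n}(F_{0})$, and $\phi(F_{0})$, I would write
\begin{eqnarray*}
\sqrt{n}\bigl(\phi_{n}(F_{n}) - \phi(F_{0})\bigr)
&=& \sqrt{n}\bigl(\phi_{n}(F_{n}) - \phi(F_{n})\bigr) + \sqrt{n}\bigl(\phi(F_{n}) - \phi(F_{0})\bigr) \\
&=& \bigl(\nu_{n}(F_{n}) - \nu_{n}(F_{0})\bigr) + \nu_{n}(F_{0}) + \sqrt{n}\bigl(\phi(F_{n}) - \phi(F_{0})\bigr),
\end{eqnarray*}
which holds uniformly in $w \in \mathcal{W}$ by the very definition of $\nu_{n}$.

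For the first term in the final line, I would appeal directly to Lemma \ref{lemma:zero}, whose hypothesis (weak convergence of $\sqrt{n}(F_{n}-F_{0})$ to a tight random element in $(\ell^{\infty}(\mathcal{V}))^2$) coincides with the hypothesis of the present lemma. This yields $\sup_{w \in \mathcal{W}} |\nu_{n}(F_{n}) - \nu_{n}(F_{0})|(w) = o_{p}(1)$, which is precisely the stochastic equicontinuity needed.

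For the third term, I would invoke the functional delta method, i.e.\ Theorem 3.9.4 of \cite{VW1996}. Lemma \ref{lemma:H-diff-2} establishes that $\phi$ is Hadamard differentiable at $F_{0}$ tangentially to $\mathbb{D}=(C(\mathcal{V}))^{2}$ with derivative $\phi'_{F_{0}}$, and the hypothesis of the current lemma gives weak convergence of $\sqrt{n}(F_{n}-F_{0})$ to a tight limit taking values in the tangent space. Together these imply
\begin{eqnarray*}
\sqrt{n}\bigl(\phi(F_{n}) - \phi(F_{0})\bigr) = \phi'_{F_{0}}\bigl(\sqrt{n}(F_{n} - F_{0})\bigr) + o_{p}(1),
\end{eqnarray*}
uniformly in $\mathcal{W}$, because Hadamard differentiability delivers not just weak convergence of the image but the linear approximation itself up to $o_p(1)$. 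Substituting this approximation and the $o_p(1)$ bound from Lemma \ref{lemma:zero} back into the decomposition above gives the claimed expansion.

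I do not anticipate a genuine obstacle here since the hard analytical work has been absorbed into Lemma \ref{lemma:zero} (the stochastic equicontinuity) and Lemma \ref{lemma:H-diff-2} (the Hadamard differentiability). The only care needed is to verify that the tangent-space condition in the functional delta method is met---namely that the weak limit of $\sqrt{n}(F_{n}-F_{0})$ concentrates on $\mathbb{D}=(C(\mathcal{V}))^{2}$---which follows from the assumed tightness and continuity of $F_{0}$, and to note that all stated convergences are uniform in $w \in \mathcal{W}$ as required.
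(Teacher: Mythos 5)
Your proposal is correct and follows essentially the same route as the paper's own proof: the identical decomposition into $\nu_{n}(F_{n}) + \sqrt{n}(\phi(F_{n})-\phi(F_{0}))$, with Lemma \ref{lemma:zero} supplying $\nu_{n}(F_{n})=\nu_{n}(F_{0})+o_p(1)$ and the functional delta method via Lemma \ref{lemma:H-diff-2} handling the deterministic part. No substantive differences to note.
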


\begin{proof}[\textbf{Proof}]
  By definition, we can write 
  \begin{eqnarray*}
    \sqrt{n}
    \big (
    \phi_{n}
    (F_{n})
    -
    \phi
    (F_{0})
    \big )
    =  
    \nu_{n}
    (F_{n})
    + 
    \sqrt{n}
    \big (
    \phi
    (F_{n})
    -
    \phi
    (F_{0})
    \big ).
  \end{eqnarray*}
  First, Lemma \ref{lemma:zero} shows that,
  uniformly in $\mathcal{W}$,
  \begin{eqnarray*}
    \nu_{n}
    (F_{n})
    =
    \nu_{n}
    (F_{0})
    +
    o_p(1).
  \end{eqnarray*}
  Since the map $\phi$
  is Hadamard differentiable,
  the functional delta method 
  in Theorem 3.9.4 of \cite{VW1996} 
  with 
  Lemma \ref{lemma:H-diff-2} implies that 
  \begin{eqnarray*}
    \sqrt{n}
    \big (
    \phi
    (F_{n})
    -
    \phi
    (F_{0})
    \big )
    =
    \phi_{F_{0}}'
    \big (
      \sqrt{n}
      (F_{n} - F_{0})
    \big ) 
    +
    o_p(1).
  \end{eqnarray*}
  Hence the desired result follows.
\end{proof}




\vspace{0.5cm} 
\begin{proof}
  [\textbf{Proof of Lemma \ref{lemma:asym-basic}}]
  The result follows from the functional central limit theorem
  for empirical distribution functions.
  See Chapter 2 of \cite{VW1996} for instance.
\end{proof}


\vspace{0.5cm} 
\begin{proof}
  [\textbf{Proof of Proposition \ref{proposition:F-convergence}}]
  Let $x \in \mathcal{X}$ be fixed.
  From its definition in (\ref{eq:est-main}), 
  the counterfactual distribution 
  estimator 
  $\hat{F}_{Y_{t}(0)|X=x, D_{t}=1}$
  can be considered as an empirical distribution 
  indexed by estimated distribution functions 
  $\hat{F}_{Y_{t-1|X=x, D_{t}=0}}$
  and 
  $\hat{F}_{Y_{t-1|X=x, D_{t}=1}}$. 
  Thus, there exists some map 
  $\phi_{n}: 
   \mathcal{C}(\mathcal{Y}_{t-1|x,0})
   \times
   \mathcal{C}(\mathcal{Y}_{t-1|x,0})
   \mapsto
   \mathcal{C}(\mathcal{Y}_{t-1|x,1}(0))
  $
  such that 
  \begin{eqnarray*}
    \hat{F}_{Y_{t}(0)|X=x, D_{t}=1}
    = 
    \phi_{n}
    (
    \hat{F}_{Y_{t-1|X=x, D_{t}=0}},
    \hat{F}_{Y_{t-1|X=x, D_{t}=1}} 
    ).
  \end{eqnarray*}
  The empirical processes 
  of 
  $\hat{F}_{Y_{t-1|X=x, D_{t}=0}}$
  and 
  $\hat{F}_{Y_{t-1|X=x, D_{t}=1}}$
  are
  given by 
  $\hat{G}_{t-1, x}^{(0)}$
  and 
  $\hat{G}_{t-1, x}^{(1)}$, respectively,
  and 
  Lemma \ref{lemma:asym-basic}  
  shows that they jointly have 
  a tight limit asymptotically. 
  Also, notice that 
  $
  \phi_{n}
  (
  F_{Y_{t-1|X=x, D_{t}=0}},
  F_{Y_{t-1|X=x, D_{t}=1}} 
  )
  =
  \tilde{F}_{Y_{t}(0)|X=x, D_{t}=1}
  $
  and 
  that
  the empirical process 
  $\tilde{F}_{Y_{t}(0)|X=x, D_{t}=1}$
  is 
  $\tilde{G}_{t, x}^{(0)}$
  as defined in (\ref{eq:ep-tilde}). 
  It follows from 
  Lemma \ref{lemma:H-diff-2}
  and \ref{lemma:l-approx}
  that 
  \begin{eqnarray*}
    \sqrt{n}
    \big (
    \hat{F}_{Y_{t}(0)|X=x, D_{t}=1}
    - 
    F_{Y_{t}(0)|X=x, D_{t}=1}
    \big )
    = 
    r_{x}^{(0)}
    \tilde{G}_{t, x}^{(0)}
    + 
    \kappa_{x}
    \big (
    \hat{G}_{t-1, x}^{(0)},
    \hat{G}_{t-1, x}^{(1)}
    \big )
    +
    o_p(1),
  \end{eqnarray*}
  uniformly in $\mathcal{Y}_{t|x, 1}(0)$,
  where
  $\kappa_{x}$
  represents 
  the Hadamard derivative of 
  $\phi_{n}$
  as shown in Lemma \ref{lemma:H-diff-2}.
  Hence 
  the extended continuous mapping theorem with 
  Lemma \ref{lemma:asym-basic} yields 
  the desired result.
\end{proof}


\vspace{0.5cm} 
\begin{proof}
  [\textbf{Proof of Theorem \ref{theorem:Q-convergence}}]
  Let $x \in \mathcal{X}$ be fixed.
  When  
  $\hat{F}_{Y_{t}(j)|X=x, D_{t}=1}(y)$
  is  
  weakly increasing in $y$, 
  we can show that 
  the corresponding quantile function 
  $\hat{F}_{Y_{t}(j)|X=x, D_{t}=1}^{-1}(\tau)$
  is Hadamard differentiable. 
  It follows from the functional delta method
  that 
  \begin{eqnarray*}
    \sqrt{n}
    \big (
    \hat{F}_{Y_{t}(j)|X=x, D_{t}=1}^{-1}(\tau)
    -
    F_{Y_{t}(j)|X=x, D_{t}=1}^{-1}(\tau)
    \big )
    \rightsquigarrow 
    \Big(
    \frac{
      \mathbb{Z}_{x}^{(j)}
    }{
      f_{Y_{t}(j)|X=x, D_{t}=1}
    }
    \Big )
    \circ 
    F_{Y_{t}(j)|X=x, D_{t}=1}^{-1}(\tau),
  \end{eqnarray*}
  as a stochastic process indexed by 
  $\tau \in \mathcal{T}$ and $j \in \{0,1\}$.
  Hence the desired result holds.
\end{proof} 
\vspace{0.5cm} 


\vspace{0.5cm} 
\begin{proof}
  [\textbf{Proof of Corollary \ref{corollary:ks}}]
The result follows from the continuous mapping theorem.  See, for example, Section 2.1 of \cite{kosorok-2007}.
\end{proof}
\vspace{0.5cm}


We now prove a technical lemma, which is a bootstrap version of Lemma \ref{lemma:l-approx}. 
We respectively denote the bootstrap counterpart 
of $\phi_{n}$ and $F_{n}$ by $\phi_{n}^{\ast}$ and $F_{n}^{\ast}$, which are obtained 
through bootstrap with some random weights $(W_{1}, \dots, W_{n})$ satisfying Assumption B.



\vspace{0.5cm}

\begin{lemma}
\label{lemma:l-approx-B}

Suppose that 
both 
$\sqrt{n}(F_{n} - F_{0})$
and  
$\sqrt{n}(F_{n}^{\ast} - F_{0})$  converge in distribution to  
some tight random elements unconditional on the original sample.  Then,  
uniformly in ${\cal W}$,
\begin{eqnarray*}
\sqrt{n} \big ( \phi_{n}^{\ast}(F_{n}^{\ast}) - \phi_{n}(F_{n}) \big ) = 
\sqrt{n} \big ( \phi_{n}^{\ast}(F_{0}) - \phi_{n}(F_{0}) \big ) +
\phi_{F_{0}}^{\prime }\big ( \sqrt{n} (F_{n}^{\ast} - F_{n} ) \big ) +
o_p(1).
\end{eqnarray*}
\end{lemma}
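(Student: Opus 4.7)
The plan is to mirror the architecture of the proof of Lemma \ref{lemma:l-approx}, but carried out for the bootstrap process. Define $\nu_{n}(F) := \sqrt{n}(\phi_{n}(F) - \phi(F))$ and its bootstrap version $\nu_{n}^{\ast}(F) := \sqrt{n}(\phi_{n}^{\ast}(F) - \phi(F))$. The fundamental algebraic decomposition is
\begin{eqnarray*}
\sqrt{n}\big(\phi_{n}^{\ast}(F_{n}^{\ast}) - \phi_{n}(F_{n})\big)
 = \big[\nu_{n}^{\ast}(F_{n}^{\ast}) - \nu_{n}(F_{n})\big]
 + \sqrt{n}\big(\phi(F_{n}^{\ast}) - \phi(F_{n})\big),
\end{eqnarray*}
and I would treat the two bracketed terms separately, aiming to replace $F_{n}^{\ast}$ and $F_{n}$ by $F_{0}$ inside the empirical process $\nu$ and to linearize the deterministic map $\phi$ using its Hadamard derivative.

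The deterministic piece is the easier one. By Lemma \ref{lemma:H-diff-2}, $\phi$ is Hadamard differentiable at $F_{0}$ with derivative $\phi'_{F_{0}}$. Under the hypothesis that both $\sqrt{n}(F_{n} - F_{0})$ and $\sqrt{n}(F_{n}^{\ast} - F_{0})$ converge in distribution to tight random elements, the functional delta method (Theorem 3.9.4 of \cite{VW1996}) gives, uniformly in $\mathcal{W}$,
\begin{eqnarray*}
\sqrt{n}\big(\phi(F_{n}^{\ast}) - \phi(F_{0})\big) = \phi'_{F_{0}}\big(\sqrt{n}(F_{n}^{\ast} - F_{0})\big) + o_{p}(1),\qquad
\sqrt{n}\big(\phi(F_{n}) - \phi(F_{0})\big) = \phi'_{F_{0}}\big(\sqrt{n}(F_{n} - F_{0})\big) + o_{p}(1).
\end{eqnarray*}
Subtracting and exploiting linearity of $\phi'_{F_{0}}$ in its argument yields $\sqrt{n}(\phi(F_{n}^{\ast}) - \phi(F_{n})) = \phi'_{F_{0}}(\sqrt{n}(F_{n}^{\ast} - F_{n})) + o_{p}(1)$. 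For the empirical process piece, Lemma \ref{lemma:zero} already delivers $\nu_{n}(F_{n}) = \nu_{n}(F_{0}) + o_{p}(1)$ uniformly in $\mathcal{W}$. What remains is to establish the bootstrap analog $\nu_{n}^{\ast}(F_{n}^{\ast}) = \nu_{n}^{\ast}(F_{0}) + o_{p}(1)$; given this, the difference $\nu_{n}^{\ast}(F_{n}^{\ast}) - \nu_{n}(F_{n})$ reduces to $\nu_{n}^{\ast}(F_{0}) - \nu_{n}(F_{0}) = \sqrt{n}(\phi_{n}^{\ast}(F_{0}) - \phi_{n}(F_{0}))$ up to $o_{p}(1)$, and combining with the linearization of $\phi$ above gives the stated expansion.

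The main obstacle is the bootstrap equicontinuity step, which is the genuine new content of the lemma. I would follow the template of the proof of Lemma \ref{lemma:zero}: rewrite $\nu_{n}^{\ast}(F)$ as a function of the argument $\Xi := G^{-1}\circ H$, exploit the assumed tightness of $\sqrt{n}(F_{n}^{\ast} - F_{0})$ to pick a finite $\delta$-net in a compact set that $\sqrt{n}(\Xi_{n}^{\ast} - \Xi_{0})$ essentially lives in, and then control, for each net point $\xi^{(j)}$, the fluctuations of $\nu_{n}^{\ast}(\Xi_{0} + n^{-1/2}\xi)$ through envelope functions of the form $n^{-1/2}\sum_{i} W_{i}^{\ast} I_{i,j}^{(1)}(\delta)$ and $n^{-1/2}\sum_{i}W_{i}^{\ast} I_{i,j}^{(2)}(\delta)$. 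The extra difficulty relative to Lemma \ref{lemma:zero} is the presence of the exchangeable weights $W_{i}^{\ast}$; controlling these envelopes requires Assumption B, specifically the uniform $(2+\epsilon)$ moment bound on the weights (to apply a Markov or Cauchy-Schwarz step decoupling $W_{i}^{\ast}$ from the indicator), the normalization $\overline{W}_{n,x}^{(d)} \to_{p} 1$, and the second-moment condition on the centered weights (so that the conditional variance of each envelope is of the same order as in the unweighted case). With those bounds, the two envelope terms can be made arbitrarily small in probability for a sufficiently small $\delta$, exactly as in the proof of Lemma \ref{lemma:zero}, completing the bootstrap equicontinuity and hence the lemma.
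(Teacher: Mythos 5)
Your proposal is correct and follows essentially the same route as the paper's proof: the same algebraic decomposition into an empirical-process part and a deterministic part, the functional delta method plus linearity of $\phi_{F_{0}}^{\prime}$ for the latter, and a bootstrap analogue of the stochastic-equicontinuity step in Lemma \ref{lemma:zero} for the former. Your sketch of how Assumption B (the moment and normalization conditions on the exchangeable weights) enters the bootstrap equicontinuity argument is in fact more explicit than the paper, which simply asserts that step "by a similar argument" to Lemma \ref{lemma:zero}.
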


\begin{proof}
  Let 
  $\tilde{\nu}_{n}^{\ast}(F)
   := 
   \sqrt{n}
   \big(
    \phi_{n}^{\ast}(F)
    -
    \phi(F)
   \big)
  $
  for 
  $F \in (\ell^{\infty}(\mathcal{V}))^2$. 
  We have 
  \begin{eqnarray*}
    \sqrt{n}
    \big (
    \phi_{n}^{\ast}(F_{n}^{\ast}) 
    -
    \phi(F_{0}) 
    \big )
    =  
    \tilde{\nu}_{n}^{\ast}(F_{n})
    +
    \sqrt{n}
    \big (
    \phi(F_{n}^{\ast})
    -
    \phi(F_{0})
    \big ).
  \end{eqnarray*}
  By a similar argument used to prove  
  Lemma \ref{lemma:zero}, we can show that,
  uniformly in $\mathcal{W}$,
  \begin{eqnarray}
    \label{eq:tt1}
    \tilde{\nu}_{n}^{\ast}(F_{n})
    =     
    \tilde{\nu}_{n}^{\ast}(F_{0})
    +
    o_p(1) .   
  \end{eqnarray}
  Also
  $\phi$ is the Hadamard differentiable function $\phi$
  and 
  $
    \sqrt{n}(
    F_{n}^{\ast}
    -
    F_{0}
    )
  $
  converges in distribution to 
  a tight random element unconditional on the original sample.
  Thus the functional delta method implies that,
  uniformly in $\mathcal{W}$,
  \begin{eqnarray}
    \label{eq:tt2}
    \sqrt{n}
    \big (
    \phi(F_{n}^{\ast})
    -
    \phi(F_{0})
    \big )
    = 
    \phi_{F_{0}}'
    \big (
    \sqrt{n}
    (F_{n}^{\ast}
    -
    F_{0}
    )
    \big )
    +o_p(1).
  \end{eqnarray}
  It follows from (\ref{eq:tt1}) and (\ref{eq:tt2}) that,
  uniformly in $\mathcal{W}$,
  \begin{eqnarray*}
    \sqrt{n}
    \big (
    \phi_{n}^{\ast}(F_{n}^{\ast}) 
    -
    \phi(F_{0}) 
    \big )
    =  
    \tilde{\nu}_{n}^{\ast}(F_{0})
    +
    \phi_{F_{0}}'
    \big (
    \sqrt{n}
    (F_{n}^{\ast}
    -
    F_{0}
    )
    \big )
    +
    o_p(1),
  \end{eqnarray*}
  which together with Lemma \ref{lemma:l-approx} yields the desired result
  because 
  $\phi_{F_{0}}'$
  is a linear map.
\end{proof}


\vspace{0.5cm} 
\begin{proof}
  [\textbf{Proof of Lemma \ref{lemma:asym-basic-B}}]
  The result follows from Theorem 3.6.13 of \cite{VW1996}.
  Thus we omit the detail.
\end{proof}


\vspace{0.5cm} 
\begin{proof}
  [\textbf{Proof of Theorem \ref{theorem:F-convergence-B}}]
  Let $x \in \mathcal{X}$ be fixed.
  First we wish to show that 
  $
    \hat{Z}_{x}^{\ast}
    \rightsquigarrow^{p} 
    \mathbb{Z}_{x}
  $,
  where 
  $
    \hat{Z}_{x}^{\ast}
    :=
    (\hat{Z}_{x}^{(0) \ast}, \hat{Z}_{x}^{(1) \ast})'
  $
  and 
  $
    \mathbb{Z}_{x}
    :=
    (\mathbb{Z}_{x}^{(0)},\mathbb{Z}_{x}^{(1)})'        
  $.
  Define 
  $\tilde{Z}_{x}^{\ast}:=
   (
    \tilde{Z}_{x}^{(0)\ast},
   \hat{Z}_{x}^{(1) \ast} 
   )'
  $,
  where
  $
    \tilde{Z}_{x}^{(0)\ast}
    :=
    r_{x}^{(0)}
    \tilde{G}_{t, x}^{(0) \ast}
    +
    \kappa_{x}
    (
    \hat{G}_{t-1, x}^{\ast},
    \hat{G}_{t, x}^{\ast}
    ),
  $
  with 
  $\tilde{G}_{t, x}^{(0) \ast}$,
  $\hat{G}_{t-1, x}^{\ast}$
  and 
  $\hat{G}_{t, x}^{\ast}$
  denoting the bootstrap version 
  of the empirical processes 
  $\tilde{G}_{t, x}^{(0)} $,
  $\hat{G}_{t-1, x}$
  and 
  $\hat{G}_{t, x}$,
  respectively.
  By the triangle inequality, we obtain
  \begin{eqnarray}
    \label{eq:E1}
    \sup_{h \in BL_{1}}
    \big |
    E_{M}
    [ 
    h( \hat{Z}_{x}^{\ast} )
    ]
    - 
    E
    [ 
    h( \mathbb{Z}_{x} )
    ]
    \big | 
    &\le& 
    \sup_{h \in BL_{1}}
    \big |
    E_{M}[ h ( \hat{Z}_{x}^{\ast}  )  ]
    - 
    E_{M}[ h ( \tilde{Z}_{x}^{\ast} ) ]
    \big | \\
    && \label{eq:E2}
    +
    \sup_{h \in BL_{1}}
    \big |
    E_{M}[ 
    h (\tilde{Z}_{x}^{\ast} )
    ]
    - 
    E
    [h( \mathbb{Z}_{x})]
    \big |. 
  \end{eqnarray}
  It suffices to show that 
  (\ref{eq:E1})
  and 
  (\ref{eq:E2})
  converge in probability to zero, separately.

  We consider 
  (\ref{eq:E1}).
  Because 
  $ 
    \big |
    E_{M}[ h ( \hat{Z}_{x}^{\ast}  )  ]
    - 
    E_{M}[ h ( \tilde{Z}_{x}^{\ast} ) ]
    \big | 
    \le 
    E_{M}
    \big |
    h ( \hat{Z}_{x}^{\ast}  )  
    - 
    h ( \tilde{Z}_{x}^{\ast} ) 
    \big | 
  $, 
  we have 
  \begin{eqnarray}
    \label{eq:bd-000}
    E
    \Big [
    \sup_{h \in BL_{1}}
    \big |
    E_{M}[ h ( \hat{Z}_{x}^{\ast}  )  ]
    - 
    E_{M}[ h ( \tilde{Z}_{x}^{\ast} ) ]
    \big | 
    \Big ]
    \le 
    E
    \Big [
    \sup_{h \in BL_{1}}
    \big |
     h ( \hat{Z}_{x}^{\ast}  )  
    - 
     h ( \tilde{Z}_{x}^{\ast} )
    \big | 
    \Big ].
  \end{eqnarray}
  Let $\epsilon>0$ be fixed
  and 
  define 
  $I_{n,\epsilon}^{\ast}
   :=
   1\{ \|\hat{Z}_{x}^{\ast} - \tilde{Z}_{x}^{\ast}\|_{\infty} > \epsilon\}
  $.
  Lemma \ref{lemma:asym-basic-B} and \ref{lemma:l-approx-B} 
  imply 
  that 
  $\lim_{n \to \infty}E[I_{n,\epsilon}^{\ast}] \le \epsilon$,
  while 
  $ \sup_{h \in BL_{1}}
    \big |
     h ( \hat{Z}_{x}^{\ast}  )  
    - 
     h ( \tilde{Z}_{x}^{\ast} )
    \big |  \le 2
  $.
  It follows that 
  \begin{eqnarray}
    \label{eq:bd-001}
    E
    \Big [
    \sup_{h \in BL_{1}}
    \big |
     h ( \hat{Z}_{x}^{\ast}  )  
    - 
     h ( \tilde{Z}_{x}^{\ast} )
    \big | 
    \cdot
    I_{n,\epsilon}^{\ast}
    \Big ]
    \le 2 \epsilon.
  \end{eqnarray}
  Also we can show that 
  \begin{eqnarray}
    \label{eq:bd-002}
    E
    \Big [
    \sup_{h \in BL_{1}}
    \big |
     h ( \hat{Z}_{x}^{\ast}  )  
    - 
     h ( \tilde{Z}_{x}^{\ast} )
    \big | 
    \cdot
    (1-I_{n,\epsilon}^{\ast})
    \Big ]
    \le \epsilon,
  \end{eqnarray}
  because 
  $
  \sup_{h \in BL_{1}}
  \big |
  h ( \hat{Z}_{x}^{\ast}  )  
  - 
  h ( \tilde{Z}_{x}^{\ast} )
  \big | 
  \le 
  \|\hat{Z}_{x}^{\ast}(y) - \tilde{Z}_{x}^{\ast}(y)\|_{\infty}
  $.
  It follows from 
  (\ref{eq:bd-001})
  and 
  (\ref{eq:bd-002})
  that 
  the right-hand side of 
  (\ref{eq:bd-000})
  is bounded by $3\epsilon$.
  Since 
  $\epsilon$ is arbitrary, 
  an application of the Markov inequality yields the convergence 
  of 
  (\ref{eq:E1})
  to 0 in probability.

  Consider (\ref{eq:E2}).
  Using Lemma \ref{lemma:asym-basic-B} together 
  with
  the continuous mapping theorem, 
  we can show that 
  (\ref{eq:E2}) converges to 0 in probability.
  Hence we obtain the desired result.

  We now consider validity of exchangeable bootstrap for the CQTT.
  Theorem 3.9.11 of \cite{VW1996}
  shows that 
  the functional delta method 
  can apply for
  Hadamard differentiable maps 
  under 
  resampling. 
  Since the map from distribution to quantile 
  is Hadamard differentiable, the desired result follows. 
\end{proof}


\pagebreak

\section*{Tables and Figures}

\newcolumntype{.}{D{.}{.}{-1}} 
\ctable[caption={Monte Carlo Simulations, DGP 1},label=,pos=!htbp,]{lcccccccc}{\tnote[]{\textit{Notes}: Each Monte Carlo simulation uses 1000 bootstrap iterations.  Each cell lists the bias and the rejection probabilities from 1000 Monte Carlo simulations. To calculate empirical rejection probabilities, we set the nominal size to be 5\%.  }}{\FL
\multicolumn{2}{l}{\bfseries }&\multicolumn{3}{c}{\bfseries DDID}&\multicolumn{1}{c}{\bfseries }&\multicolumn{3}{c}{\bfseries CIC}\NN[5pt]
&\multicolumn{1}{c}{$N$}&\multicolumn{1}{c}{0.1}&\multicolumn{1}{c}{0.5}&\multicolumn{1}{c}{0.9}&\multicolumn{1}{c}{}&\multicolumn{1}{c}{0.1}&\multicolumn{1}{c}{0.5}&\multicolumn{1}{c}{0.9}\ML
{\bfseries TE=0}&&&&&&&\NN[5pt]
\hspace{5pt}\textit{Bias}&100 &0.044&0.045&0.081&&0.012&-0.097&-0.295\NN[5pt]
&200 &0.016&0.021&0.066&&-0.009&-0.048&-0.141\NN[5pt]
&500 &0.016&0.008&0.023&&-0.005&-0.042&-0.074\NN[5pt] 
\hspace{5pt}\textit{Rej. Prob.} &100      &0.042&0.037&0.023&&0.042&0.041&0.100\NN[5pt]
&200 &0.049&0.050&0.044&&0.051&0.056&0.076\NN[5pt]
&500 &0.043&0.047&0.034&&0.035&0.043&0.069\ML
{\bfseries TE=1}&&&&&&&&\NN[5pt]
\hspace{5pt} \textit{Bias}&100 &0.059&0.064&0.109&&0.251&-0.051&-0.293\NN[5pt]
&200 &0.031&0.027&0.049&&0.128&-0.052&-0.200\NN[5pt]
&500 &0.014&0.019&0.025&&0.053&-0.019&-0.090\NN[5pt]
\hspace{5pt} \textit{Rej. Prob.}&100 &0.397&0.675&0.359&&0.409&0.617&0.548\NN[5pt]
&200 &0.742&0.949&0.703&&0.713&0.859&0.614\NN[5pt]
&500 &0.994&1.000&0.992&&0.983&0.992&0.756\LL
}

\newcolumntype{.}{D{.}{.}{-1}} 
\ctable[caption={Monte Carlo Simulations, DGP 2},label=,pos=!tbp,]{lccccccc}{\tnote[]{\textit{Notes}: Each Monte Carlo simulation uses 1000 bootstrap iterations.  Each cell lists the bias and the root mean squared error from 1000 Monte Carlo simulations.  Here, $\bar{\rho}$  controls whether or not the the Copula Invariance assumption is violated.  When $\bar{\rho} = 0$, the Copula Invariance assumption holds; the further away  $\bar{\rho}$ is from 0, the more strongly the Copula Invariance assumption is violated.}}{\FL
\multicolumn{1}{l}{\bfseries }&\multicolumn{3}{c}{\bfseries TE=0}&\multicolumn{1}{c}{\bfseries }&\multicolumn{3}{c}{\bfseries TE=1}\NN[5pt]
\multicolumn{1}{c}{$\bar{\rho}$}&\multicolumn{1}{c}{0.1}&\multicolumn{1}{c}{0.5}&\multicolumn{1}{c}{0.9}&\multicolumn{1}{c}{}&\multicolumn{1}{c}{0.1}&\multicolumn{1}{c}{0.5}&\multicolumn{1}{c}{0.9}\ML
\textit{Bias} &&&&&&&\NN[5pt]
~~0.00 &0.020&0.034&0.037&&0.023&0.023&0.050\NN[5pt]
~~0.05 &0.073&0.023&0.012&&0.088&0.029&0.008\NN[5pt]
~~0.10 &0.121&0.028&-0.033&&0.112&0.019&-0.032\NN[5pt]
~~0.50 &0.425&0.013&-0.374&&0.435&0.027&-0.353\NN[10pt]
\textit{RMSE} &&&&&&&\NN[5pt]
~~0.00 &0.348&0.261&0.340&&0.342&0.248&0.359\NN[5pt]
~~0.05 &0.348&0.256&0.324&&0.358&0.258&0.342\NN[5pt]
~~0.10 &0.374&0.260&0.352&&0.374&0.259&0.346\NN[5pt]
~~0.50 &0.565&0.272&0.529&&0.566&0.264&0.508\LL
}


\clearpage

\newcolumntype{.}{D{.}{.}{-1}} 
\ctable[caption={Summary Statistics (averages)},label=,pos=!tbp,]{lrrrr}{\tnote[]{\textit{Notes}:
The second and third columns report sample averages for states that raised their minimum wage during the first quarter of 2007 (treated states) and states that had their minimum wage equal to the federal minimum wage for the entire period.  
The last column presents differences between the figures in the second and third columns 
with $p$-values in parentheses.

\textit{Sources}: Panel data from the Current Population Survey (CPS) \citep{ipums-cps-2015}. \\ \ \vspace{2cm}}}{\FL
\multicolumn{1}{l}{}&\multicolumn{1}{c}{Treated States}&\multicolumn{1}{c}{Untreated States}&\multicolumn{1}{c}{Difference (p-value)}\ML
White&\multicolumn{1}{c}{0.89}&\multicolumn{1}{c}{0.88}&\multicolumn{1}{c}{0.014 (0.06)} \NN[5pt]
Male&\multicolumn{1}{c}{0.48}&\multicolumn{1}{c}{0.49}&\multicolumn{1}{c}{-0.008 (0.52)}\NN[5pt]
College Degree&\multicolumn{1}{c}{0.45}&\multicolumn{1}{c}{0.41}&\multicolumn{1}{c}{0.039 (0.00)}\NN[5pt]
Log Earnings&\multicolumn{1}{c}{6.39}&\multicolumn{1}{c}{6.33}&\multicolumn{1}{c}{0.065 (0.00)}\LL
}

\newcolumntype{.}{D{.}{.}{-1}} 
\ctable[caption={Conditional QTT Estimates},label=Subgroup,pos=!tbp,]{lllccccc}{\tnote[]{\textit{Notes:}
Conditional QTTs by race, gender, and education subgroups.  $N$ is the number of observations in each group.  The column of ``Reject $H_{0}$'' reports whether the null of no effect at any quantile is rejected using the Kolmogorov-Smirnov test with nominal size of 5\% using equally spaced quantiles from 0.05 to 0.95 by 0.01 and is based on 1000 bootstrap iterations to calculate critical values.  The last three columns report conditional QTTs at the 0.1, 0.5, and 0.9 quantiles.  Standard errors are pointwise and computed using the bootstrap with 1000 iterations.
  
\textit{Sources:}  Panel data from the Current Population Survey (CPS) \citep{ipums-cps-2015}}}{\FL
\multicolumn{3}{c}{Subgroup}&&&\multicolumn{3}{c}{Quantile}\NN
\cline{1-3}\cline{6-8} \NN[5pt]
Race & Gender & Education &
$N$&
\multicolumn{1}{c}{Reject $H_0$}
& 0.1 & 0.5 & 0.9 \ML
White& Male& College&1617&&0.004&0.007&0.000\NN[5pt]
&&&&&(0.048)&(0.031)&(0.037)\NN[5pt]
& & Non-College&2306&&-0.021&-0.019&0.075\NN[5pt]
&&&&&(0.068)&(0.028)&(0.068)\NN[5pt]
& Female& College&1629&Yes&-0.029&0.027&-0.046\NN[5pt]
&&&&&(0.056)&(0.033)&(0.052)\NN[5pt]
& & Non-College&1980&&-0.038&-0.043&-0.086\NN[5pt]
&&&&&(0.054)&(0.035)&(0.068)\NN[5pt]
Non-White& Male& College&156&&-0.492&-0.07&0.292\NN[5pt]
&&&&&(0.265)&(0.161)&(0.268)\NN[5pt]
& & Non-College&282&Yes&-0.087&-0.044&0.036\NN[5pt]
&&&&&(0.186)&(0.095)&(0.136)\NN[5pt]
& Female& College&209&&0.097&0.033&0.007\NN[5pt]
&&&&&(0.172)&(0.088)&(0.121)\NN[5pt]
& & Non-College&340&Yes&-0.25&-0.026&-0.043\NN[5pt]
&&&&&(0.175)&(0.086)&(0.162)\LL
}


\clearpage

\begin{figure}[h!]
\center
\caption{Conditional QTTs of Minimum Wage Increase on Earnings}
  (with 95\% Confidence Bands)

\includegraphics{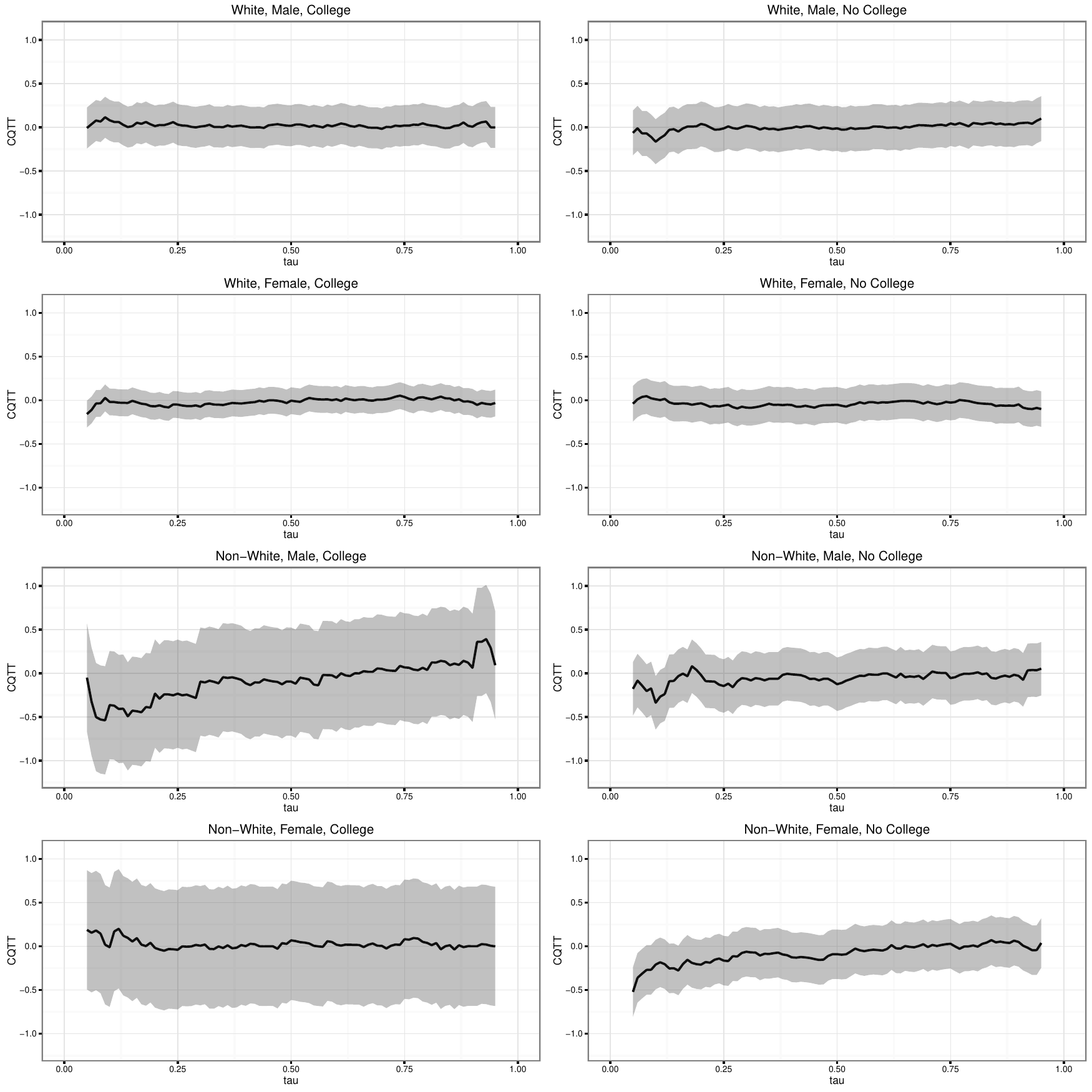}
\end{figure}

\begin{minipage}[c]{15cm}
\small
\textit{Notes:}
Conditional QTT estimates for groups formed by race, gender, and education.  The figure also provides 95\% confidence bands.  These are formed by inverting Kolmogorov-Smirnov statistics and based on 1000 bootstrap iterations.

\textit{Sources}: Panel data from the Current Population Survey (CPS) \citep{ipums-cps-2015}.
\end{minipage}


\end{document}